\documentclass[sigplan,screen,authorversion,nonacm]{acmart}
\settopmatter{printacmref=false, printccs=false, printfolios=true}

\begin{CCSXML}
<ccs2012>
   <concept>
       <concept_id>10003752.10010124.10010131.10010132</concept_id>
       <concept_desc>Theory of computation~Algebraic semantics</concept_desc>
       <concept_significance>500</concept_significance>
       </concept>
   <concept>
       <concept_id>10003752.10010124.10010131.10010137</concept_id>
       <concept_desc>Theory of computation~Categorical semantics</concept_desc>
       <concept_significance>500</concept_significance>
       </concept>
   <concept>
       <concept_id>10003752.10003790.10002990</concept_id>
       <concept_desc>Theory of computation~Logic and verification</concept_desc>
       <concept_significance>500</concept_significance>
       </concept>
   <concept>
       <concept_id>10003752.10003790.10011740</concept_id>
       <concept_desc>Theory of computation~Type theory</concept_desc>
       <concept_significance>300</concept_significance>
       </concept>
 </ccs2012>
\end{CCSXML}

\ccsdesc[500]{Theory of computation~Algebraic semantics}
\ccsdesc[500]{Theory of computation~Categorical semantics}
\ccsdesc[500]{Theory of computation~Logic and verification}
\ccsdesc[300]{Theory of computation~Type theory}

\usepackage{csquotes}
\usepackage[capitalise]{cleveref}
\usepackage{xspace}
\usepackage{xcolor}
\usepackage[color]{coqdoc}
\usepackage{url}
\usepackage{textgreek}
\usepackage{stmaryrd}
\usepackage{pmboxdraw}
\usepackage[pdf,all,2cell]{xy}\SelectTips{cm}{10}\SilentMatrices\UseAllTwocells
\usepackage{tikz}
\usetikzlibrary{arrows,matrix,decorations.pathmorphing,
  decorations.markings, calc, backgrounds}
\usepackage{listings}
\usepackage{xifthen} 
\usepackage{newunicodechar}
\usepackage[inline]{enumitem}
\usepackage{microtype}

\AtEndPreamble{%
\theoremstyle{acmdefinition}
\newtheorem{remark}[theorem]{Remark}}

\newunicodechar{⦃}{\ensuremath{\llparenthesis}}
\newunicodechar{⦄}{\ensuremath{\rrparenthesis}}
\newunicodechar{⨿}{\ensuremath{\amalg}}

\newunicodechar{●}{\ensuremath{\bullet}}

\newunicodechar{◾}{\ensuremath{\blacksquare}}
\newunicodechar{◽}{\ensuremath{\square}}
\newunicodechar{⟲}{\ensuremath{\circlearrowleft}}
\newunicodechar{⟳}{\ensuremath{\circlearrowright}}
\newunicodechar{▭}{\ensuremath{\boxdot}}
\newunicodechar{╝}{{\textSFxxvi}}
\newunicodechar{∑}{\ensuremath{\sum}}
\newunicodechar{⟦}{\ensuremath{\llbracket}}
\newunicodechar{⟧}{\ensuremath{\rrbracket}}
\newunicodechar{⇒}{\ensuremath{\Rightarrow}}


\newcommand{\naturalto}{%
  \mathrel{\vbox{\offinterlineskip
    \mathsurround=0pt
    \ialign{\hfil##\hfil\cr
      \normalfont\scalebox{1.2}{.}\cr
      $\longrightarrow$\cr}
  }}%
}

\newcommand{\teletype}[1]{\ensuremath{\mathtt{#1}}}
\newcommand{\systemname}[1]{\teletype{\color{darkgray}#1}\xspace}

\newcommand{\Agda}{\systemname{Agda}}
\newcommand{\UniMath}{\systemname{UniMath}}

\newcommand{\Coq}{\systemname{Coq}}

\newcommand{\Haskell}{\systemname{Haskell}}

\newcommand{\cfont}[1]{\mathsf{#1}}

\newcommand{\A}{\mathcal{A}}
\newcommand{\B}{\mathcal{B}}
\newcommand{\CC}{\mathcal{C}}
\newcommand{\D}{\mathcal{D}}

\newcommand{\V}{\mathcal{V}}
\newcommand{\T}{\mathcal{T}}

\newcommand{\ob}[1]{\ensuremath{{#1}_0}}
\newcommand{\constfunctor}[1]{\ensuremath{\underline{#1}}}
\newcommand{\Id}[1]{\cfont{Id}}
\newcommand{\Ptd}{\mathrm{Ptd}}
\newcommand{\idwt}[1]{\cfont{id}}
\newcommand{\vcomp}{\mathbin{\circ}} 
\newcommand{\hcomp}{\mathbin{\cdot}} 
\newcommand{\fat}[1]{\textbf{#1}}

\newcommand{\lam}{\cfont{lam}}
\newcommand{\app}{\cfont{app}}

\newcommand{\join}{\cfont{join}}
\newcommand{\prejoin}{\cfont{j}}

\newcommand{\set}{\ensuremath{\mathsf{Set}}\xspace}

\newcommand{\arity}{\cfont{ar}}
\newcommand{\lst}{\cfont{list}}
\newcommand{\sort}{\cfont{S}}
\newcommand{\ccsort}{\CC^{\sort}}

\newcommand{\ifte}[3]{\cfont{if}\,#1\,\cfont{then}\,#2\,\cfont{else}\,#3}
\newcommand{\tinl}{\cfont{inl}}
\newcommand{\tinr}{\cfont{inr}}

\newcommand{\proj}[1]{\cfont{pr}_{#1}}
\newcommand{\UU}{\cfont{U}}
\newcommand{\option}{\cfont{option}}
\newcommand{\arr}{\rightarrow}
\renewcommand{\th}[1]{\theta_{#1}}

\newcommand{\et}[1]{\eta_{#1}}              
\newcommand{\ta}[1]{\tau_{#1}}              
\newcommand{\gst}[2]{\llparenthesis {#2} \rrparenthesis}

\newcommand{\convert}{\ensuremath{\equiv}}
\newcommand{\eqdef}{\ensuremath{:\convert}}
\newcommand{\eqdec}{\ensuremath{=_{\mathrm{dec}}}}
\newcommand{\Endo}{\mathrm{Endo}}
\newcommand{\CAT}{\mathrm{CAT}}


\makeatletter
\def\prd#1{\@ifnextchar\bgroup{\prd@parens{#1}}{%
    \@ifnextchar\sm{\prd@parens{#1}\@eatsm}{%
    \@ifnextchar\prd{\prd@parens{#1}\@eatprd}{%
    \@ifnextchar\;{\prd@parens{#1}\@eatsemicolonspace}{%
    \@ifnextchar\\{\prd@parens{#1}\@eatlinebreak}{%
    \@ifnextchar\narrowbreak{\prd@parens{#1}\@eatnarrowbreak}{%
      \prd@noparens{#1}}}}}}}}
\def\prd@parens#1{\@ifnextchar\bgroup%
  {\mathchoice{\@dprd{#1}}{\@tprd{#1}}{\@tprd{#1}}{\@tprd{#1}}\prd@parens}%
  {\@ifnextchar\sm%
    {\mathchoice{\@dprd{#1}}{\@tprd{#1}}{\@tprd{#1}}{\@tprd{#1}}\@eatsm}%
    {\mathchoice{\@dprd{#1}}{\@tprd{#1}}{\@tprd{#1}}{\@tprd{#1}}}}}
\def\@eatsm\sm{\sm@parens}
\def\prd@noparens#1{\mathchoice{\@dprd@noparens{#1}}{\@tprd{#1}}{\@tprd{#1}}{\@tprd{#1}}}
\def\lprd#1{\@ifnextchar\bgroup{\@lprd{#1}\lprd}{\@@lprd{#1}}}
\def\@lprd#1{\mathchoice{{\textstyle\prod}}{\prod}{\prod}{\prod}({\textstyle #1})\;}
\def\@@lprd#1{\mathchoice{{\textstyle\prod}}{\prod}{\prod}{\prod}({\textstyle #1}),\ }
\def\tprd#1{\@tprd{#1}\@ifnextchar\bgroup{\tprd}{}}
\def\@tprd#1{\mathchoice{{\textstyle\prod_{(#1)}}}{\prod_{(#1)}}{\prod_{(#1)}}{\prod_{(#1)}}}
\def\dprd#1{\@dprd{#1}\@ifnextchar\bgroup{\dprd}{}}
\def\@dprd#1{\prod_{(#1)}\,}
\def\@dprd@noparens#1{\prod_{#1}\,}

\def\@eatnarrowbreak\narrowbreak{%
  \@ifnextchar\prd{\narrowbreak\@eatprd}{%
    \@ifnextchar\sm{\narrowbreak\@eatsm}{%
      \narrowbreak}}}
\def\@eatlinebreak\\{%
  \@ifnextchar\prd{\\\@eatprd}{%
    \@ifnextchar\sm{\\\@eatsm}{%
      \\}}}
\def\@eatsemicolonspace\;{%
  \@ifnextchar\prd{\;\@eatprd}{%
    \@ifnextchar\sm{\;\@eatsm}{%
      \;}}}


\newcommand{\longhash}{c26d11bef0b4d4c226ca854d8de340d684c2a10a}


\newcommand{\nolinkcoqident}[1]{\nolinkurl{#1}} 
\makeatletter
\newcommand{\coqident}{\begingroup\@makeother\#\@coqident}
\newcommand{\@coqident}[3][]{%
  \ifthenelse{\isempty{#2}}%
  {\nolinkcoqident{#3}}%
  {{\color{blue}
    \ifthenelse{\isempty{#1}}%
  {\href{\coqdocurl{#2}{#3}}{\nolinkcoqident{#3}}}%
  {\href{\coqdocurl{#2}{#3}}{\nolinkcoqident{#1}}}}}%
\endgroup}
\newcommand{\coqfile}[2]{%
  {\color{blue}
  \ifthenelse{\isempty{#1}}%
  {\href{\coqdocbaseurl #2.html}{\nolinkcoqident{#2.v}}}%
  {\href{\coqdocbaseurl #1.#2.html}{\nolinkcoqident{#2.v}}}}}
\makeatother


\newcommand{\eg}{e.\,g.\xspace}
\newcommand{\aka}{a.\,k.\,a.\xspace}
\newcommand{\omegacocont}{$\omega$-co\-con\-tin\-u\-ous\xspace}
\newcommand{\omegacocontinuity}{$\omega$-co\-con\-ti\-nu\-ity\xspace}  
\newcommand{\ie}{i.\,e.\xspace}


\setcopyright{rightsretained}
\acmPrice{}
\acmDOI{10.1145/3497775.3503678}
\acmYear{2022}
\copyrightyear{2022}
\acmSubmissionID{poplws22cppmain-p10-p}
\acmISBN{978-1-4503-9182-5/22/01}
\acmConference[CPP '22]{Proceedings of the 11th ACM SIGPLAN International Conference on Certified Programs and Proofs}{January 17--18, 2022}{Philadelphia, PA, USA}
\acmBooktitle{Proceedings of the 11th ACM SIGPLAN International Conference on Certified Programs and Proofs (CPP '22), January 17--18, 2022, Philadelphia, PA, USA}

\begin{document}

\title{Implementing a Category-Theoretic Framework for Typed Abstract Syntax}

\author{Benedikt Ahrens}
\email{B.P.Ahrens@tudelft.nl}
\orcid{0000-0002-6786-4538}
\affiliation{%
  \institution{Delft University of Technology}
  \country{Netherlands}
}
\affiliation{%
  \institution{University of Birmingham}
  \country{United Kingdom}
}

\author{Ralph Matthes}
\email{ralph.matthes@irit.fr}
\orcid{0000-0002-7299-2411}
\affiliation{%
  \institution{IRIT, Université de Toulouse, CNRS, Toulouse INP, UT3}
  \city{Toulouse}
  \country{France}
}

\author{Anders Mörtberg}
\email{anders.mortberg@math.su.se}
\orcid{0000-0001-9558-6080}
\affiliation{%
  \institution{Department of Mathematics, Stockholm University}
  \city{Stockholm}
  \country{Sweden}
}


\begin{abstract}
  In previous work (``From signatures to monads in UniMath''), we
  described a category-theoretic construction of abstract syntax
  from a signature, mechanized in the \UniMath library based on the
  \Coq proof assistant.

  In the present work, we describe what was necessary to generalize
  that work to account for simply-typed languages.
  First, some definitions had to be generalized to account for the natural appearance of non-endofunctors in the simply-typed case.
  As it turns out,
  in many cases our mechanized results carried over to the generalized definitions without any code change.
  Second, an existing mechanized library on \omegacocont functors had to be
  extended by constructions and theorems necessary for constructing multi-sorted syntax.
  Third, the theoretical framework for the semantical signatures had to be generalized from a monoidal
  to a bicategorical setting, again to account for non-endofunctors arising in the typed case. This uses actions of endofunctors on functors
  with given source, and the corresponding notion of strong functors between actions, all formalized in \UniMath using a recently developed library of bicategory theory.
  We explain what needed to be done to plug all of these ingredients together, modularly.

  The main result of our work is a general construction that, when fed with a signature for a simply-typed language, returns an implementation of that language together with suitable boilerplate code, in particular, a certified monadic substitution operation.
\end{abstract}

\keywords{typed abstract syntax, monad, signature, formalization, computer-checked proof.}

\hyphenation{endo-functor endo-functors}

\maketitle

\section{Introduction}\label{sec:introduction}
There are many frameworks for the specification and analysis of abstract syntax.
Frequently, these frameworks are first developed for untyped syntax (see, \eg,~\citet{DBLP:conf/lics/FiorePT99,DBLP:conf/lics/Hofmann99,DBLP:conf/lics/GabbayP99}), with
extensions to the typed case often omitted, or promised as future work (see, \eg,~\citet{DBLP:journals/pacmpl/AhrensHLM20}).
However, in practice, such extensions are not trivial (see, \eg, \citet{DBLP:conf/ppdp/Fiore02,DBLP:conf/ppdp/MiculanS03}).
In the present paper, we report on the extension of one such framework---by \citet{signatures_to_monads}---from untyped to simply-typed syntax.
That work uses a notion of signature for untyped languages in terms of \emph{strong functors}; using Mendler iteration \citep{DBLP:journals/apal/Mendler91}, it builds the syntax generated by any such signature, together with a monadic substitution operation. The work is mechanized in the proof assistant \Coq \citep{Coq}, using the \UniMath library of univalent mathematics~\citep{UniMath}.

In the untyped case, an abstract signature is given by an endofunctor $H : [\CC,\CC] \to [\CC,\CC]$, for a suitably structured category~$\CC$ (\eg, $\set$).
For instance, the functor underlying the signature of the untyped $\lambda$-calculus is
\[ H~F = \Id{}_{\CC} + (F \times F) + (F \cdot \option)\enspace. \]
We write $\cdot$ for functor composition in
applicative order and functor application with a space, so $(F \cdot \option)~X = F~(X + 1)$, which hence
corresponds to $F$ with an additional free variable that will be bound by the abstraction constructor. With the aim of
initial algebra semantics
the \omegacocontinuity of $H$ is established and
an initial algebra for $H$ is constructed using a classical theorem of \citet{Adamek74}. Such
an initial algebra is an endofunctor $\Lambda : [\CC,\CC]$ with a(n
iso)morphism $\alpha : H~\Lambda \to \Lambda$ encoding the
constructors. The fact that this is an initial algebra furthermore ensures that $\Lambda$ has a suitable induction principle, making it equivalent to the following, more familiar, inductive type:

\begin{coqdoccode}
\coqdocemptyline
\coqdocnoindent
\coqdockw{Inductive} \coqdef{UniMath.SubstitutionSystems.STLC alt.InductiveLC.LC}{LC}{\coqdocinductive{LC}} (\coqdef{UniMath.SubstitutionSystems.STLC alt.X:77}{X}{\coqdocbinder{X}} : \coqref{UniMath.Foundations.Sets.hSet}{\coqdocdefinition{Type}}) : \coqref{UniMath.Foundations.Preamble.UU}{\coqdocdefinition{Type}} :=\coqdoceol
\coqdocindent{1.00em}
\ensuremath{|} \coqdef{UniMath.SubstitutionSystems.STLC alt.InductiveLC.Var}{Var}{\coqdocconstructor{Var}} : \coqref{UniMath.SubstitutionSystems.STLC alt.X:77}{\coqdocvariable{X}} \coqref{UniMath.Foundations.Init.::type scope:x 'xE2x86x92' x}{\coqdocnotation{→}} \coqref{UniMath.SubstitutionSystems.STLC alt.LC:78}{\coqdocinductive{LC}} \coqref{UniMath.SubstitutionSystems.STLC alt.X:77}{\coqdocvariable{X}}\coqdoceol
\coqdocindent{1.00em}
\ensuremath{|} \coqdef{UniMath.SubstitutionSystems.STLC alt.InductiveLC.App}{App}{\coqdocconstructor{App}} : \coqref{UniMath.SubstitutionSystems.STLC alt.LC:78}{\coqdocinductive{LC}} \coqref{UniMath.SubstitutionSystems.STLC alt.X:77}{\coqdocvariable{X}} \coqref{UniMath.Foundations.Init.::type scope:x 'xE2x86x92' x}{\coqdocnotation{→}} \coqref{UniMath.SubstitutionSystems.STLC alt.LC:78}{\coqdocinductive{LC}} \coqref{UniMath.SubstitutionSystems.STLC alt.X:77}{\coqdocvariable{X}} \coqref{UniMath.Foundations.Init.::type scope:x 'xE2x86x92' x}{\coqdocnotation{→}} \coqref{UniMath.SubstitutionSystems.STLC alt.LC:78}{\coqdocinductive{LC}} \coqref{UniMath.SubstitutionSystems.STLC alt.X:77}{\coqdocvariable{X}}\coqdoceol
\coqdocindent{1.00em}
\ensuremath{|} \coqdef{UniMath.SubstitutionSystems.STLC alt.InductiveLC.Lam}{Lam}{\coqdocconstructor{Lam}} : \coqref{UniMath.SubstitutionSystems.STLC alt.LC:78}{\coqdocinductive{LC}} (\coqref{UniMath.SubstitutionSystems.STLC alt.X:77}{\coqdocvariable{X}} \coqref{UniMath.SubstitutionSystems.STLC alt.InductiveLC.:::x '+' '1'}{\coqdocnotation{+}} \coqref{UniMath.SubstitutionSystems.STLC alt.InductiveLC.:::x '+' '1'}{\coqdocnotation{1}}) \coqref{UniMath.Foundations.Init.::type scope:x 'xE2x86x92' x}{\coqdocnotation{→}} \coqref{UniMath.SubstitutionSystems.STLC alt.LC:78}{\coqdocinductive{LC}} \coqref{UniMath.SubstitutionSystems.STLC alt.X:77}{\coqdocvariable{X}}.\coqdoceol
\coqdocemptyline
\end{coqdoccode}

By equipping $H$ also with a \emph{strength}, specifying how
substitution should act ``homomorphically'' on the constructors, the
functor $\Lambda$ may be equipped with a substitution operator
following \citet{DBLP:journals/tcs/MatthesU04} (formalized in
\UniMath by \citet{DBLP:conf/types/AhrensM15}). This substitution
satisfies the laws of a monad---an observation originating in the work
of \citet{DBLP:conf/csl/AltenkirchR99,DBLP:journals/scp/BellegardeH94,DBLP:journals/jfp/BirdP99}. Thus the
framework of \citep{signatures_to_monads} not only constructs datatypes
as initial algebras, but also equips them with a certified---in the
sense of satisfying formally verified laws---substitution
operation. To ease the use of the framework, the strong functor $H$ is
constructed algorithmically from a simple notion of signature---resulting in a form of datatype-generic programming in \UniMath.

In the present work, we extend this framework by both generalizing to the
simply-typed case, as well as refining and analyzing many of the
involved notions. Starting from a simple notion of multi-sorted
signature (\cref{sec:mssigs}), we construct (\cref{sec:endofunctor}) a corresponding signature functor
$H : [\CC,\CC] \to [\CC,\CC]$, for some suitably structured
category~$\CC$ (\eg, $\set^\sort$ for a (discrete, small) category $\sort$ of
sorts). This construction
involves functors that are not endo, so the notion of strength had to
be generalized to $H : [\CC,\CC] \to [\CC, \D]$---indeed, it
can easily be generalized to $[\CC,\D'] \to [\CC, \D]$, for
categories $\D$ and $\D'$. This generalization is purely formal and
the old \UniMath proofs worked without major changes. We find this quite
remarkable and a good example of how the initial investment of
mechanizing mathematics pays off in the long run---it would have been quite time-consuming to establish
sufficient conditions on $\D$ and $\D'$ by hand, and then verify that the proofs still go
through. On the way to this more
general notion of abstract signature we also decompose the notion of signature with strength of
\citep{DBLP:journals/tcs/MatthesU04} to clarify what is needed for the
substitution operation (\cref{sec:hetsubst}) and for it
to satisfy the monad laws (\cref{sec:hss}).

In order to construct the abstract datatype specified by $H$ as an
initial algebra, as well as the substitution monad, we prove that $H$
is \omegacocont. Many of the results already in \UniMath apply, but
some were missing, in particular results about when post-composing
with a functor is \omegacocont
(\cref{sec:omegacocont}). Having done this, the results of~\citep{signatures_to_monads} apply and we obtain the substitution
operation as a monad (\cref{sec:examples}). Throughout the paper we
rely on the simply-typed $\lambda$-calculus (STLC) as a running example,
but we have also implemented more complex languages, including PCF
(\cref{ex:pcf}) and the pre-syntax of the calculus of constructions
(\cref{ex:coc}). In the case of the STLC the obtained signature
functor $H$ is a bit more complex than in the untyped case (see
\cref{ex:sigSTLC}), but the generated syntax is equivalent to the following inductive
type:
\begin{coqdoccode}
\coqdocemptyline
\coqdocnoindent
\coqdockw{Inductive} \coqdef{UniMath.SubstitutionSystems.STLC alt.InductiveSTLC.STLC}{STLC}{\coqdocinductive{STLC}} (\coqdef{UniMath.SubstitutionSystems.STLC alt.X:66}{X}{\coqdocbinder{X}} : \coqref{UniMath.SubstitutionSystems.STLC alt.InductiveSTLC.S}{\coqdocaxiom{S}} \coqref{UniMath.Foundations.Init.::type scope:x 'xE2x86x92' x}{\coqdocnotation{→}} \coqref{UniMath.Foundations.Sets.hSet}{\coqdocdefinition{Type}}) : \coqref{UniMath.SubstitutionSystems.STLC alt.InductiveSTLC.S}{\coqdocaxiom{S}} \coqref{UniMath.Foundations.Init.::type scope:x 'xE2x86x92' x}{\coqdocnotation{→}} \coqref{UniMath.Foundations.Preamble.UU}{\coqdocdefinition{Type}} :=\coqdoceol
\coqdocindent{1.00em}
\ensuremath{|} \coqdef{UniMath.SubstitutionSystems.STLC alt.InductiveSTLC.Var}{Var}{\coqdocconstructor{Var}} : \coqdockw{\ensuremath{\forall}} \coqdef{UniMath.SubstitutionSystems.STLC alt.t:69}{t}{\coqdocbinder{t}}, \coqref{UniMath.SubstitutionSystems.STLC alt.X:66}{\coqdocvariable{X}} \coqref{UniMath.SubstitutionSystems.STLC alt.t:69}{\coqdocvariable{t}} \coqref{UniMath.Foundations.Init.::type scope:x 'xE2x86x92' x}{\coqdocnotation{→}} \coqref{UniMath.SubstitutionSystems.STLC alt.STLC:67}{\coqdocinductive{STLC}} \coqref{UniMath.SubstitutionSystems.STLC alt.X:66}{\coqdocvariable{X}} \coqref{UniMath.SubstitutionSystems.STLC alt.t:69}{\coqdocvariable{t}}\coqdoceol
\coqdocindent{1.00em}
\ensuremath{|} \coqdef{UniMath.SubstitutionSystems.STLC alt.InductiveSTLC.App}{App}{\coqdocconstructor{App}} : \coqdockw{\ensuremath{\forall}} \coqdef{UniMath.SubstitutionSystems.STLC alt.s:70}{s}{\coqdocbinder{s}} \coqdef{UniMath.SubstitutionSystems.STLC alt.t:71}{t}{\coqdocbinder{t}}, \coqref{UniMath.SubstitutionSystems.STLC alt.STLC:67}{\coqdocinductive{STLC}} \coqref{UniMath.SubstitutionSystems.STLC alt.X:66}{\coqdocvariable{X}} (\coqref{UniMath.SubstitutionSystems.STLC alt.s:70}{\coqdocvariable{s}} \coqref{UniMath.SubstitutionSystems.STLC alt.InductiveSTLC.:::x 'xE2x87x92' x}{\coqdocnotation{⇒}} \coqref{UniMath.SubstitutionSystems.STLC alt.t:71}{\coqdocvariable{t}}) \coqref{UniMath.Foundations.Init.::type scope:x 'xE2x86x92' x}{\coqdocnotation{→}} \coqref{UniMath.SubstitutionSystems.STLC alt.STLC:67}{\coqdocinductive{STLC}} \coqref{UniMath.SubstitutionSystems.STLC alt.X:66}{\coqdocvariable{X}} \coqref{UniMath.SubstitutionSystems.STLC alt.s:70}{\coqdocvariable{s}} \coqref{UniMath.Foundations.Init.::type scope:x 'xE2x86x92' x}{\coqdocnotation{→}} \coqref{UniMath.SubstitutionSystems.STLC alt.STLC:67}{\coqdocinductive{STLC}} \coqref{UniMath.SubstitutionSystems.STLC alt.X:66}{\coqdocvariable{X}} \coqref{UniMath.SubstitutionSystems.STLC alt.t:71}{\coqdocvariable{t}}\coqdoceol
\coqdocindent{1.00em}
\ensuremath{|} \coqdef{UniMath.SubstitutionSystems.STLC alt.InductiveSTLC.Lam}{Lam}{\coqdocconstructor{Lam}} : \coqdockw{\ensuremath{\forall}} \coqdef{UniMath.SubstitutionSystems.STLC alt.s:72}{s}{\coqdocbinder{s}} \coqdef{UniMath.SubstitutionSystems.STLC alt.t:73}{t}{\coqdocbinder{t}}, \coqref{UniMath.SubstitutionSystems.STLC alt.STLC:67}{\coqdocinductive{STLC}} (\coqref{UniMath.SubstitutionSystems.STLC alt.X:66}{\coqdocvariable{X}} \coqref{UniMath.SubstitutionSystems.STLC alt.InductiveSTLC.:::x '+' 'x7B' x 'x7D'}{\coqdocnotation{+}} \coqref{UniMath.SubstitutionSystems.STLC alt.InductiveSTLC.:::x '+' 'x7B' x 'x7D'}{\coqdocnotation{\{}} \coqref{UniMath.SubstitutionSystems.STLC alt.s:72}{\coqdocvariable{s}} \coqref{UniMath.SubstitutionSystems.STLC alt.InductiveSTLC.:::x '+' 'x7B' x 'x7D'}{\coqdocnotation{\}}}) \coqref{UniMath.SubstitutionSystems.STLC alt.t:73}{\coqdocvariable{t}} \coqref{UniMath.Foundations.Init.::type scope:x 'xE2x86x92' x}{\coqdocnotation{→}} \coqref{UniMath.SubstitutionSystems.STLC alt.STLC:67}{\coqdocinductive{STLC}} \coqref{UniMath.SubstitutionSystems.STLC alt.X:66}{\coqdocvariable{X}} (\coqref{UniMath.SubstitutionSystems.STLC alt.s:72}{\coqdocvariable{s}} \coqref{UniMath.SubstitutionSystems.STLC alt.InductiveSTLC.:::x 'xE2x87x92' x}{\coqdocnotation{⇒}} \coqref{UniMath.SubstitutionSystems.STLC alt.t:73}{\coqdocvariable{t}}).\coqdoceol
\coqdocemptyline
\coqdocnoindent
\end{coqdoccode}
Here, \coqdocaxiom{S} is a small set of sorts closed under \coqdocnotation{⇒},
and \coqdocvariable{X} is a set of sorted variables indexed by
\coqdocaxiom{S}.  Furthermore,
\coqref{UniMath.SubstitutionSystems.STLC alt.X:66}{\coqdocvariable{X}} \coqref{UniMath.SubstitutionSystems.STLC alt.InductiveSTLC.:::x '+' 'x7B' x 'x7D'}{\coqdocnotation{+}} \coqref{UniMath.SubstitutionSystems.STLC alt.InductiveSTLC.:::x '+' 'x7B' x 'x7D'}{\coqdocnotation{\{}} \coqref{UniMath.SubstitutionSystems.STLC alt.s:72}{\coqdocvariable{s}} \coqref{UniMath.SubstitutionSystems.STLC alt.InductiveSTLC.:::x '+' 'x7B' x 'x7D'}{\coqdocnotation{\}}}
denotes \coqdocvariable{X} with an additional variable of
sort \coqdocvariable{s}.

Having successfully generalized the formalization in \citep{signatures_to_monads} to simply-typed signatures, we
give in \cref{sec:understanding} a bicategorical analysis
of the notion of strength of \cref{def:strength}. To this end we
relate it to an action-based notion of strength as put forward by
\citet{Pareigis1977}, which also occurs in the work of
\citet{DBLP:conf/lics/Fiore08}. This is then also related to a notion
of relative strength of \citep{DBLP:conf/types/AhrensM15}.  This
section makes use of displayed categories
\citep{DBLP:journals/lmcs/AhrensL19} and of a library of bicategory
theory \citep{DBLP:conf/rta/AhrensFMW19} recently added to \UniMath.

\subsection{Contributions}
We present a fully formalized mathematical framework for simply-typed
syntax relying on a categorical construction of inductive families as
initial algebras. In this way, we avoid meta-programming and complex
nested inductive types, and instead work completely internally to
\UniMath. We generically equip the constructed syntax with a
monadic substitution operation. Furthermore, we provide a
bicategorical analysis of the involved notions, motivated by
the generalization of strength needed to encompass multi-sorted
languages.

All results in this paper have been formalized in \UniMath. The code is integrated into the \UniMath library \citep{UniMath}.
For documentation, we refer to a specific version of \UniMath, available
in commit \href{https://github.com/UniMath/UniMath/commit/\longhash}{\longhash}.
The HTML documentation derived from this version is hosted \href{\coqdocbasebaseurl/toc.html}{\color{blue}{online}}.
To connect the paper and the formalization we provide
clickable identifiers (\eg, \coqident{SubstitutionSystems.MultiSorted_alt}{MultiSortedSig}) for all definitions and results. By
clicking on one of the links, one gets taken to the corresponding
definition in the HTML documentation.

\subsection{Related Work}
\label{sec:related-work}

There are many theoretical frameworks suitable for the study of abstract syntax, see, e.g., the work of \citet{DBLP:conf/ppdp/MiculanS03,Tanaka:2005:UCF:1088454.1088457,DBLP:conf/lics/Fiore08,DBLP:conf/ppdp/Fiore02,DBLP:conf/types/GambinoH03,DBLP:conf/fscd/HirschowitzHL20,DBLP:conf/lics/FioreH13,DBLP:journals/jfp/AltenkirchGHMM15,DBLP:conf/icfp/ChapmanDMM10,DBLP:conf/fossacs/Hamana11,DBLP:conf/icalp/Fiore12,DBLP:conf/icfp/HamanaF11}.
We are interested in the following properties of such a framework:
\begin{enumerate}
\item Is there a computer-checked implementation? \label{item:computer-checked}
\item Is there an internal notion of signature? \label{item:internal-sig}
\item Does it support multi-sorted languages?
\item What is the theory that it is implemented in?
\end{enumerate}

\subsubsection{Implementations}
In this section, we only consider frameworks that satisfy property~\ref{item:computer-checked}.
We classify them into three categories according to property~\ref{item:internal-sig}.

Into the first category fall frameworks that work with an \emph{external} notion of signature.
Such frameworks take a signature specified in some domain-specific language, \eg, as a \Haskell data type, and return a library of computer code for some proof checker such as \Coq.%
\footnote{This approach is called ``generative'' by~\citet{DBLP:conf/esop/LeeOCY12}.}
To this category belong the tools \systemname{DBGen}~\citep{DBLP:conf/itp/Polonowski13}, \systemname{Ott}~\citep{DBLP:journals/jfp/SewellNOPRSS10}, \systemname{LNGen}~\citep{lngen}, \systemname{Autosubst}~\citep{DBLP:conf/itp/SchaferTS15} and \systemname{Autosubst2}~\citep{DBLP:conf/cpp/StarkSK19}. These are different from our approach in that they typically rely on some form of meta-programming as the notion of signature is external.

The second category consists of work building a library of generic functions and reasoning principles for (typed) abstract syntax.
The library \systemname{NomPa}~\citep{nompa, pouillard:tel-00759059} falls into this category, as does \systemname{GMETA}~\cite{DBLP:conf/esop/LeeOCY12}.
In this line of work, there are several modules called signatures, providing an interface for signatures. However, there is not a fixed mathematical definition of a signature, making them less suited for general mathematical study of syntax and more focused on convenience for specifying and working with a specific language.

Into the third category fall tools that have an \emph{internal} definition of signatures, stated within the proof assistant.
In this category, signatures are themselves first-class objects and may be reasoned about within the system.
Our work falls into this category, and we thus spend some more time comparing it to other work in this category.

Ahrens~\citep{DBLP:journals/mscs/Ahrens16} considers only untyped syntax. The work is fully implemented in the computer proof assistant \Coq, and relies on \Coq's inductive types and associated structural recursion for the construction of syntax and substitution.

Work by \citet{DBLP:conf/rta/AhrensHLM19,DBLP:journals/lmcs/AhrensHLM21} only discusses untyped syntax, but construct, like us, syntactic models without relying on inductive types.
\citet{DBLP:journals/jfrea/AhrensZ11} consider a simple notion of signature akin to our multi-sorted signatures of \cref{def:mssigs}.
To any such signature, they associate a category of models and construct an initial such model.
For this construction, crucially, they rely on suitable mutually inductive type families.
Compared to their work, we study here a more general notion of abstract signatures (\cref{def:strength}) together with a map from multi-sorted signatures to these abstract signatures.
We then construct the initial models without relying on general inductive types, but rather on a category-theoretic construction of initial algebras.
A set-theoretic construction of initial models closer to ours is given in the PhD thesis of \citet{ju_phd}, but is not computer-checked.

\citet{DBLP:journals/pacmpl/AllaisA0MM18,DBLP:journals/corr/abs-2001-11001} consider ``descriptions'' (see also~\citep{DBLP:conf/lics/DagandM13}) for signatures.
Like \citet{DBLP:journals/jfrea/AhrensZ11}, they assume suitable inductive datatypes; descriptions then straightforwardly generate relative ``term'' monads. These monads should be initial in a suitable category, even though such categorical formulations are not used or explored in that work.
In both works, a well-behaved monadic substitution operation is constructed via the induction mechanism of \Agda.
The framework of \citet{DBLP:conf/icfp/LohM11} is suitable for specifying abstract syntax with binding; typed syntax is not discussed. Again, the construction of syntax relies on \Agda's datatypes.

In \cite{fiore-szamozvancev}, the authors study two notions of signature similar to the ones studied here---multi-sorted binding signatures and signatures with strength---and also give a translation from the former to the latter, similar to our translation in \cref{sec:msbstoss}. Like \cite{DBLP:journals/jfrea/AhrensZ11} they use inductive datatypes to construct syntax.
Their notion of syntax includes ``meta-variables''.

In summary, compared to the aforementioned work, we study here the construction of syntax,
and of a suitable substitution operation for syntax, from a \emph{category-theoretic} rather than a \emph{type-theoretic} perspective.
Specifically, we construct syntax exploiting \omegacocontinuity of the signature functor,
and substitution for that syntax using Mendler-style recursion, expressed categorically.

\subsubsection{Comparison to Other Frameworks}
Here, we review categorical frameworks for abstract syntax that are not, to our knowledge, mechanized.

Fiore, Plotkin, and Turi~\cite{DBLP:conf/lics/FiorePT99} consider untyped syntax specified by an untyped variant of binding signatures (\cref{def:mssigs}).
The authors consider both a categorical formulation of recursion to specify substitution (on ``free'' objects),
as well as a type-theoretic formulation using inductive types and structural recursion.
However, no link is made between the category-theoretic and the type-theoretic formulations.
The mathematical structure given by substitution and its properties is taken to be a ``substitution monoid''.

Fiore~\cite{DBLP:conf/lics/Fiore08} considers a notion of signature (given by an endofunctor $\Sigma : \CC \to \CC$) with strength, like we do in the present work. They discuss the categorical construction of syntax and substitution as the colimit of an $\omega$-cocontinuous functor, see \cite[Cor.~4]{DBLP:conf/lics/Fiore08}.
We discuss their notion of signature in more detail in~\cref{sec:understanding}.
Fiore and Hamana~\cite{DBLP:conf/lics/FioreH13} later consider polymorphically typed terms, such as System F.
They do not discuss the construction of abstract syntax.
Hur~\cite{DBLP:phd/ethos/Hur10} considers, like we do here, category-theoretic constructions of syntax.
The focus of that work is on equations between terms.
Mahmoud~\cite{DBLP:phd/ethos/Mahmoud11} considers structural induction in a type-theoretic style;
a category-theoretic construction of syntax or substitution is not discussed.

Work by Hirschowitz and Maggesi~\cite{DBLP:conf/wollic/HirschowitzM07, DBLP:journals/iandc/HirschowitzM10} considers syntax and substitution in form of monads on the category of sets.
Hirschowitz and Maggesi also implement a special case of their work---for the $\lambda$-calculus---in the computer proof assistant \Coq.
(A comparison between Fiore, Plotkin, and Turi's approach~\citep{DBLP:conf/lics/FiorePT99} using substitution monoids and Hirschowitz and Maggesi's using monads is given in Zsidó's thesis~\cite{ju_phd}.)
The first part of the present work, in particular, of \cref{sec:msbstocs}, can be considered to be ``zooming in'' on Hirschowitz and Maggesi's work, notably on the construction of the initial monad, expressed purely categorically.

Ahrens~\citep{DBLP:conf/wollic/Ahrens12,DBLP:journals/corr/abs-1107-4751,DBLP:journals/lmcs/Ahrens19} considers simply-typed signatures and the construction of the generated syntax in a type-theoretic setting.
Some instances of the framework are formalized in \Coq, but general signatures are not.

Hirschowitz, Hirschowitz, and Lafont~\cite{DBLP:conf/fscd/HirschowitzHL20} consider a very abstract notion of signature.
They construct abstract syntax via categorical machinery, and furthermore the passage to a ``quotient syntax'' modulo a system of equations.
In the extended version~\cite[Sec.~5.5]{HHL20-ext}, the authors explain the construction of a signature in their sense from a signature à la Fiore, Plotkin, and Turi~\cite{DBLP:conf/lics/FiorePT99}.

\subsection{UniMath and Category Theory Therein}

In this section we provide a brief introduction to \UniMath and fix notations used throughout the paper.
We write $a = b$ for the type of identifications/equalities/paths from $a$ to $b$, and $a \convert b$ for definitional/judgmental equality---in particular, we write $a \eqdef b$ for defining $a$ to be $b$.
We do not rely on any inductive types other than the ones specified in the prelude of \UniMath, such as identity types, sum types, natural numbers, and booleans.
Indeed, part of the work consists of the construction of certain initial algebras from dependent products, dependent sums, identities, and natural numbers.
We use the notions of propositions and sets of Univalent Foundations:
a type $X$ is a proposition if $\prd{x,y:X}x=y$ is inhabited,
and a set if the type $x = y$ is a proposition for all $x, y : X$.
Hence, despite working in \Coq, we do not rely on the universes \coqdocdefinition{Prop} or \coqdocdefinition{Set}.

However, our main constructions and results,
while conveniently expressed using univalent foundations,
are not dependent on the full univalence axiom.
On top of Martin-Löf type theory~\citep{MLTT}, they rely on
propositional and functional extensionality, and propositional
resizing.
The key consequence of these that we use is the construction of effective set quotients due to \citet{DBLP:journals/mscs/Voevodsky15}.
These axioms are compatible with the Uniqueness of Identity Proofs (UIP) principle,
and hence with the interpretation of types as sets.

We assume there to be at least two type-theoretic universes, $\UU_0 : \UU_1$.
We call a type $X : \UU_0$ a \emph{small} type.
Following the convention established in the HoTT book~\cite{hottbook}, we leave the universe levels implicit throughout the paper.

A \emph{category} $\CC$ in \UniMath is given by a type of objects $\ob{\CC}$ and a family of sets $\CC(x,y)$ for any $x, y : \ob{\CC}$.
We call $\CC$ small if objects and morphisms are types from $\UU_0$.
We assume the reader to be familiar with the concepts of category
theory as found in the standard text by \citet{maclane}.
We sometimes remark that specific categories are univalent \citep{rezk_completion}. Such results typically depend on the full univalence axiom, but none of our main results depend on categories being univalent. The reader who prefers to read our results outside of univalent foundations may hence safely ignore these remarks.

The category $\set \eqdef \set(\UU_0)$ of sets has, as objects, sets from the universe $\UU_0$; this is a category whose types live in $\UU_1$, hence it is not small.
The set quotients discussed above are used to ensure that $\set$ is cocomplete (\ie that it has small colimits).
Note that the correct use of universe levels is not checked by \UniMath---we hence have to keep track of them ourselves.
This is particularly important when taking (co)limits in $\set$;
we ensure that we only take such (co)limits indexed by a small graph, \ie, a graph whose types of nodes and vertices are elements of $\UU_0$.

\section{From Multi-Sorted Binding Signatures to Signatures with Strength}\label{sec:msbstoss}
In this section we give a category-theoretic notion of abstract
signature in the form of strong functors. But first we will consider
our main source of examples: signature functors arising from a notion
of multi-sorted binding signatures.

\subsection{Multi-Sorted Binding Signatures}\label{sec:mssigs}

In what follows, we fix a type $\sort$ representing the sorts.
Many of our constructions will rely on $\sort$ being a set;
in the formalization, this assumption is explicitly stated when needed,
but in this paper we take $\sort$ to be a set everywhere, for sake of simplicity.

We start with the following
definition of multi-sorted binding signatures:

\begin{definition}[\coqident{SubstitutionSystems.MultiSorted_alt}{MultiSortedSig}]\label{def:mssigs}
  A \emph{multi-sorted binding signature} is given by a small set $I$
  together with an arity function
  $\arity : I \to \lst (\lst(\sort) \times \sort) \times \sort$.
\end{definition}

Intuitively, for any $i : I$, there is a term constructor described by
its ``arity'' $\arity(i)$, whose first component describes the list of
arguments: for each argument we specify a list of sorts for the
variables bound in it, as well as its sort. The second component of
$\arity(i)$ designates the sort of the term constructed by the
constructor pertaining to the index~$i$. This definition of
multi-sorted signatures is not new, and variations on it can be found
in the literature, e.g. in the work of
\citet[Definition~4.3]{DBLP:journals/jfrea/AhrensZ11}. The standard example is that of
the simply-typed $\lambda$-calculus (STLC).

\begin{example}[\coqident{SubstitutionSystems.STLC_alt}{STLC_Sig}]\label{example:stlc}
  Assume that $\sort$ is closed under a binary operation
  $\Rightarrow\,: \sort \to \sort \to \sort$ representing function
  types. We have to put into $I$ the sort parameters of the typing
  rules of the term constructors of STLC. Thus, $I$ is taken to be
  $(\sort \times \sort) + (\sort \times \sort)$. The left summand
  pertains to the application operation while the right summand
  describes $\lambda$-abstraction:
  \begin{align*}
    \arity(\tinl\langle s,t\rangle)&\eqdef \big\langle[\langle[],s\Rightarrow t\rangle,\langle[],s\rangle],t\big\rangle\enspace,\\
    \arity(\tinr\langle s,t\rangle)&\eqdef \big\langle[\langle[s],t\rangle],s\Rightarrow t\big\rangle\enspace.
  \end{align*}

  For sorts $s$ and $t$, there is an application constructor
  taking as input two terms, of sorts $s\Rightarrow t$ and $s$, respectively, to
  yield a term of sort $t$. Again, for sorts $s$ and $t$, there is
  an abstraction constructor taking as input a term of sort $t$
  with a an additional free variable of sort $s$, and yielding a
  term of sort $s \Rightarrow t$. This is close to the
  usual presentation of the typing rules of STLC. However, we do
  not include a constructor for variables as these
  will be added generically in \cref{sec:hetsubst}.
  Note that this is a form of locally nameless presentation: bound variables have canonical names.
\end{example}

\subsection{Functors from Multi-Sorted Binding Signatures}\label{sec:endofunctor}

Recall that $\ccsort$ is the functor category $[\sort,\CC]$ where
$\sort$ is viewed as a discrete category. In the case when $\CC$ is
$\set$, objects of this categories are simply functions
$X : \sort \to \set$. These hence correspond to sets of sorted
variables analogously to the parameter \coqdocvariable{X} of the
example of the inductive family \coqdocinductive{STLC} in \cref{sec:introduction}.
Equivalently, we could consider families of objects in $\CC$ indexed over
$\sort$, but $[\sort,\CC]$ has the advantage that results on functor
categories (constructions of (co)limits, etc) apply directly.

To construct $H$, we assume that $\sort$ is a set and that $\CC$ has a
terminal object $1$, binary products, and set-indexed coproducts.
We first define a few helper functors:

\begin{definition}[\coqident{SubstitutionSystems.MultiSorted_alt}{sorted_option_functor}]
  Let $s$ be a sort. The \emph{sorted option functor}
  $\option_s : \ccsort \to \ccsort$ is defined as
  \[
    \option_s~X~t \eqdef X~t \, + \, \coprod_{(s = t)} 1\enspace.
  \]
\end{definition}

\begin{remark}
  The rationale behind the definition is that when given $X : \ccsort$
  and $t : \sort$ we get that $\option_s~X~t$ is $X$ with an extra
  variable of sort $s$ if $s$ and $t$ are equal. If
  $\sort$ has decidable equality, $\eqdec$, this can be
  defined as
  \[
    \option_s~X~t \eqdef \ifte{(s \eqdec t)}{(X~t + 1)}{(X~t)}\enspace.
  \]

  However, in order to avoid assuming decidable equality for $\sort$
  we do the above trick where we form a coproduct of $1$ over the type
  of proofs that $s = t$ (\ie, we form a subsingleton). For this
  coproduct to exist, \eg, in the category of sets, it would
  suffice to assume that $\sort$ is a (small) $1$-type so that
  $s = t$ is a (small) set. However, this would make $\option_s~X~t$
  add as many variables as there are proofs of $s = t$, which is not
  what we intend with the definition.
\end{remark}

\begin{definition}[\coqident{SubstitutionSystems.MultiSorted_alt}{option_list}]
  Given a non-empty list of sorts $\ell \convert [s_1, \ldots, s_n]$,
  we define $\option~\ell : \ccsort \to \ccsort$ as
  \[
    \option~\ell \eqdef \option_{s_1} \hcomp \, (\option_{s_2} \hcomp \, \ldots)\enspace.
  \]
  For an empty list, we define $\option~[] \eqdef \Id{}$.
\end{definition}

\begin{definition}[\coqident{SubstitutionSystems.MultiSorted_alt}{projSortToC}]
  For any $s : \sort$ we have a \emph{projection functor}
  $\proj{s} : \ccsort \to \CC$ defined as:
  \[
    \proj{s}~X \eqdef X~s\enspace.
  \]
\end{definition}

\begin{definition}[\coqident{SubstitutionSystems.MultiSorted_alt}{hat_functor}]
  For any $s : \sort$ we have a left adjoint\footnote{The fact that
    these functors are adjoint is proved in \cref{lem:hat}.} to
  $\proj{s}$, written $\hat{s} : \CC \to\ccsort$, defined as
  \[
    \hat{s}~X~t \eqdef \coprod_{(s = t)} X\enspace.
  \]
\end{definition}

\begin{remark}
  Once again we use the ``coproduct-trick'' to avoid assuming
  decidable equality on $\sort$. If we had decidable equality then
  $\hat{s}$ could be defined as
  \[
    \hat{s}~X~t \eqdef \ifte{(s \eqdec t)}{X}{0}\enspace,
  \]
  where $0$ is the initial object in $\CC$ (which exists since $\CC$ has
  set-indexed coproducts).
\end{remark}

We can now define the functor $H$ in multiple steps. Given
$a = (\ell,s) : \lst(\sort) \times \sort$ specifying the sorts of
bound variables and type of an argument to some constructor we first
define the object part of a functor
$F^{a} : [\ccsort,\ccsort] \to [\ccsort,\CC]$ as
\[
  F^a~X \eqdef \proj{s} \hcomp X \hcomp \option~\ell\enspace.
\]

That is, $F^a$ itself is the composition of ``precomposition with
$\option~\ell$'' and ``postcomposition with $\proj{s}$''. Note that
$F^a$ is \emph{not} an endofunctor on a category of endofunctors. This
will make it necessary, in the next section, to give a general notion
of signature with strength to be able to analyze $F^a$.

Given an arity $(\vec{a},t)$ with $t : \sort$ and
$\vec{a} = [a_1, \ldots, a_m]$ with each $a_i = (\ell,s)$ as above, we
define the object part of an endofunctor $F^{(\vec{a},t)}$ on
$[\ccsort,\ccsort]$ as
\[
  F^{(\vec{a},t)}~X \eqdef \hat{t} \hcomp (F^{a_1}~X \times \ldots \times F^{a_m}~X)\enspace.
\]

More formally, $F^{(\vec{a},t)}$ is obtained by first forming the
pointwise product of $F^{a_1}, \ldots, F^{a_m}$ and then by composing
with ``postcomposition with $\hat{t}$''.  In the case when $\vec{a}$
is empty, we compose $\hat{t} \hcomp \_$ with the functor that
constantly outputs $1$.

Combining all of this we get the definition of $H$:

\begin{definition}[\coqident{SubstitutionSystems.MultiSorted_alt}{MultiSortedSigToFunctor}]\label{def:functor_from_binding_sig}
  Given a multi-sorted binding signature $(I,\arity)$, its associated
  signature functor $H : [\ccsort,\ccsort] \to [\ccsort,\ccsort]$ is
  given by the following (pointwise) coproduct:
  \[
    H~X \eqdef \coprod_{i~:~I} F^{\,\arity(i)}(X)\enspace.
  \]
\end{definition}

Note that the above coproduct exists as $I$ is assumed to be a set in
\cref{def:mssigs}.

\begin{example}[\coqident{SubstitutionSystems.STLC_alt}{app_source}, \coqident{SubstitutionSystems.STLC_alt}{lam_source}]
  \label{ex:sigSTLC}
  We apply the general construction and obtain a functor which
  is the coproduct of two families of functors, $\app$ and
  $\lam$, indexed by $s, t : \sort$. These functors are
  defined pointwise at $X : [\ccsort,\ccsort]$ as:
  \begin{align*}
    \app~X &\eqdef \hat{t} \hcomp ((\proj{s \Rightarrow t} \hcomp X) \times (\proj{s} \hcomp X))\enspace, \\
    \lam~X &\eqdef \widehat{s \Rightarrow t} \hcomp (\proj{t} \hcomp X \hcomp \option_s)\enspace.
  \end{align*}

  To obtain exactly these functors in the formalization,
  special care is needed in $F^{(\vec{a},t)}$ to avoid taking the
  product with the constant functor. These details make it a little
  less direct to formalize the strength laws and the proof of
  \omegacocontinuity; we refer the interested reader to the
  formalization.
\end{example}

\begin{remark}
  There are operations for taking the disjoint sum of signatures, on
  both multi-sorted binding signatures and signatures with
  strength. They can be used to assemble complicated signatures from
  simpler ones, as we do in the example of PCF in~\cref{ex:pcf}. The
  translation of signatures presented here preserves these sums.
\end{remark}

\subsection{Extending the Signature Functor for Substitution}\label{sec:hetsubst}

Since variables play a role that is different from all other term
constructors when it comes to substitution, the endofunctor $H$ is not
supposed to comprise the inclusion of variables into the terms. The
``variable case'' is added explicitly to $H$, by considering the
endofunctor $\constfunctor{\Id{\ccsort}}+H$ on $[\ccsort,\ccsort]$. Here $\constfunctor{\Id{\ccsort}}$ is the functor that is constantly $\Id{}$. So on objects, $\constfunctor{\Id{\ccsort}}+H$
associates with $X:[\ccsort,\ccsort]$ the endofunctor $\Id{\ccsort}+H~X$ on
$\ccsort$. Accordingly, an $(\constfunctor{\Id{\ccsort}}+H)$-algebra consists of
$T:[\ccsort,\ccsort]$ and $\alpha:\Id{\ccsort}+H~T\to T$. It is convenient to
present $\alpha$ (uniquely) as $[\eta,\tau]$ with $\eta:\Id{\ccsort}\to T$ and
$\tau:H~T\to T$.%
\footnote{Note that we are not assuming that $(T,\alpha)$ is an \emph{initial} $(\constfunctor{\Id{\ccsort}}+H)$-algebra.
The construction of initial $(\constfunctor{\Id{\ccsort}}+H)$-algebras is considered in Section~\ref{sec:msbstocs}.}

We will express substitution in the style of a monad in monoid form,
thus with the aforementioned $T$ and
$\eta$, and moreover a natural transformation
$\prejoin:T\hcomp T\to T$ that is typically called $\mu$ in the
literature. We use the triple format since we want to
base our implementation on that of \citep{signatures_to_monads} which
in turn relies on the development of
\citep{DBLP:journals/tcs/MatthesU04} that uses monoid style monads.

In the following definition, we do not care about all the monad laws, but about the
specification of the behavior of $\prejoin$ (in other words, the ``functional properties'' that characterize it).
Crucially, we address
the desideratum that substitution is ``homomorphic'' on the
domain-specific constructors---as specified by the multi-sorted
binding signature.
The next definition is in the spirit of
\citep{DBLP:journals/tcs/MatthesU04},
but isolates more clearly what is
needed to specify the properties of substitution itself.

\begin{definition}[\coqident{SubstitutionSystems.SubstitutionSystems}{heterogeneous_substitution}]\label{def:hetsubst}
  Given $H$ as specified above, a \fat{heterogeneous substitution} consists
  of an $(\constfunctor{\Id{\ccsort}}+H)$-algebra
  $(T,\eta,\tau)$ and a natural transformation
  \[\th{}:(H {-}) \hcomp T \naturalto H ({-} \hcomp T) : [\ccsort,\ccsort] \to [\ccsort,\ccsort]\] between functors of the same type as $H$, such
  that there is a unique $[\ccsort,\ccsort]$-morphism $\prejoin:T\hcomp T\naturalto T$
  making the diagram in \cref{fig:hetsubst} commute.
\begin{figure}[tb]
\[
\begin{array}{c}
\xymatrix@C3pc@R1pc{
T \ar[r]^{\et{} \hcomp T} \ar[ddr]_{1_{T}}
    & T \hcomp T \ar[dd]^{\prejoin}
        & (H\, T) \hcomp T  \ar[l]_-{\ta{} \hcomp T} \ar[d]^{\th{T}} \\
    &
        & H (T \hcomp T) \ar[d]^{H\, \prejoin} \\
    & T
        & H\, T  \ar[l]_-{\ta{}}
}
\end{array}
\]
\vspace{-0.4cm}
\caption{\coqident{SubstitutionSystems.SubstitutionSystems}{bracket_property_parts_identity_nicer}}\label{fig:hetsubst}
\vspace{-0.6cm}
\Description{}
\end{figure}
\end{definition}

Notice that the functor $T$ in the upper left corner of
\cref{fig:hetsubst} is equal to $\Id{}\hcomp T$, in a way which makes
the type of arrows out of both convertible, and this allows to
type-check $\eta \hcomp T$.

  We recognize, in the triangle part of the diagram,
  one of the monad laws---the law
 saying, intuitively, that substitution for variables is done by
look-up. The rectangle part is the specification of ``homomorphic''
behavior of $\prejoin$ on the other term constructors, as instructed by
$\theta$. The natural transformation $\theta$ is only used with parameter $T$, but enters fully
into the later theorems on unique existence of $\prejoin$ when
$(T,\eta,\tau)$ is an initial algebra. The notion is not limited to
initial algebras (\citep{DBLP:journals/tcs/MatthesU04} develops the
theory for non-wellfounded syntax as well), but we will only use it
for those. We do not claim that this data suffices to obtain the other
monad laws that are shown in
\cref{fig:othermonadlaws}. Note that the functor $T$ in the upper
left corner is equal to $T\hcomp\Id{}$, again in a way which allows to type-check $T \hcomp \eta$.
Likewise, we use associativity of functor
composition in the upper right corner that holds in a way that allows to type-check the two arrows out
of it, for both possible associations of the expression. If the monad laws are satisfied, the usual name for $\prejoin$ is ``$\join$''.
\begin{figure}[tb]
\[
\begin{array}{c}
\xymatrix@C3pc@R2pc{
T \ar[r]^{T \hcomp \et{}} \ar[dr]_{1_{T}}
    & T \hcomp T \ar[d]^{\prejoin}
        & T\hcomp T \hcomp T  \ar[l]_-{T \hcomp \prejoin} \ar[d]^{\prejoin\hcomp T} \\
    & T
        & T\hcomp T  \ar[l]_-{\prejoin}
}
\end{array}
\]
\vspace{-0.3cm}
\caption{Second and third part of \coqident{CategoryTheory.Monads.Monads}{Monad_laws_pointfree}}\label{fig:othermonadlaws}\vspace{-0.3cm}
\Description{}
\end{figure}

\subsection{Adding Strength to Ensure the Monad Laws}\label{sec:hss}

As mentioned above, the unique existence of $\prejoin$ according to
\cref{def:hetsubst} will not suffice to guarantee that
$(T,\eta,\prejoin)$ becomes a monad.
In this section, we take three simple steps to ensure that our signature functor $H$ does generate a monad.
The second and third step are intertwined, and are discussed together.

\subsubsection{Generalizing the Type of $\theta$}
Firstly, more natural
transformations have to be uniquely determined by a generalization of
\cref{fig:hetsubst} which depends on a version of $\theta$ with two
arguments. Given a signature functor $H$ as before, we will call
\emph{prestrength}
for $H$ any natural transformation
$\th{}:(H {-}) \hcomp U {\sim} \arr H ({-} \hcomp U {\sim})$ between
bifunctors
$[\ccsort,\ccsort] \times \Ptd(\ccsort) \arr [\ccsort,\ccsort]$, with
the category $\Ptd(\ccsort)$ of pointed endofunctors
over $\ccsort$
and its forgetful functor $U : \Ptd(\ccsort) \to [\ccsort,\ccsort]$. This  notion of prestrength will be an instance of the
less constrained \cref{def:prestrength} below.
In the situation of
\cref{def:hetsubst}, we have the pointed endofunctor $(T,\eta)$, and
if we fix the second argument (symbolized by $\sim$) of a prestrength
$\th{}$ to $(T,\eta)$, we get (by virtue of $U(T,\eta) \convert T$) a natural
transformation of the type required in that definition.
\begin{definition}[\coqident{SubstitutionSystems.SubstitutionSystems}{hss}]\label{def:hss}
  Given $H$ as before and a prestrength $\theta$ for $H$,
 an $(\constfunctor{\Id{\CC}}+H)$-algebra $(T,\eta,\tau)$ is a
\fat{heterogeneous substitution system} (HSS) for
$(H, \th{})$, if, for every $\Ptd(\ccsort)$-morphism $f : (Z,e) \arr (T,
\et{})$, there exists a unique $[\ccsort,\ccsort]$-morphism $h : T \hcomp Z \arr
T$ making the diagram in \cref{fig:hss} commute. This morphism is denoted $\gst{(Z,e)}{f}$.
\begin{figure}[tb]
\[
\begin{array}{c}
\xymatrix@C3pc@R1pc{
Z \ar[r]^{\et{} \hcomp Z} \ar[ddr]_{Uf}
    & T \hcomp Z \ar[dd]^{h}
        & (H T) \hcomp Z  \ar[l]_-{\ta{} \hcomp Z} \ar[d]^{\th{T, (Z,e)}} \\
    &
        & H (T \hcomp Z) \ar[d]^{H h} \\
    & T
        & H T  \ar[l]_-{\ta{}}
}
\end{array}
\]
\vspace{-0.4cm}
\caption{\coqident{SubstitutionSystems.SubstitutionSystems}{bracket_property_parts}}\label{fig:hss}
\vspace{-0.4cm}
\Description{}
\end{figure}
\end{definition}
Given our previous remark on fixing the second argument of $\th{}$ to
$(T,\eta)$, the uniquely existing $\prejoin$ of a heterogeneous
substitution is the special case with $f \eqdef \idwt{}_{(T,\eta)}$ (\ie, the identity natural transformation).
Analogously
to \cref{def:hetsubst}, the first argument of $\th{}$ is always set to
$T$ in this diagram, but needs to vary when proving that one obtains
an HSS when $(T,\eta,\tau)$ is an initial algebra.

\subsubsection{Widening the Type of $H$ and Adding Strength Laws to $\theta$}
Secondly, after this first step, we will still not be able to ensure the monad
laws for the obtained $(T,\eta,\prejoin)$ through an HSS. We are
lacking laws for $\th{}$ concerning its dependency on its second
argument. However, there is a third concern of a more practical
nature: if we take the first step, we have to define a prestrength
$\th{}$ for the signature functor $H$ generated in
\cref{sec:endofunctor} along the structure of the given multi-sorted
binding signature. As already mentioned there, the functor
$F^a: [\ccsort,\ccsort] \to [\ccsort,\CC]$, which is a building block
of that construction of $H$, does not have the same type as $H$, and
it is not even an endofunctor. So, if we are to develop a library of
functors with prestrength, we need to widen the notion so as to cover
those situation with ``varying types'' as well. We will then be able
to construct prestrengths for all building blocks of the signature
functor $H$. In parallel, we can propagate suitable laws for them
so that the laws governing the prestrength for $H$ ensure that an HSS
for it will satisfy the monad laws (see \cref{lem:monadsfromhsswithstrength} below).

\begin{definition}[\coqident{SubstitutionSystems.Signatures}{PrestrengthForSignature}]\label{def:prestrength}
  For categories $\CC$, $\D$, $\D'$ and a functor $H: [\CC,\D'] \to [\CC,\D]$, a \fat{prestrength} for $H$ is a natural transformation
  \[\th{}:(H {-}) \hcomp U {\sim} \arr H ({-} \hcomp U {\sim})\] between bifunctors
  $[\CC,\D'] \times \Ptd(\CC) \arr [\CC,\D]$.
\end{definition}

\begin{definition}[\coqident{SubstitutionSystems.Signatures}{StrengthForSignature}]\label{def:strength}
  Let a prestrength $\th{}$ be given for a functor $H$ as specified above.
  It is called a \fat{strength} for $H$ if it is ``homomorphic'' in the second
  argument $\sim$, in the following sense.
  The source and target bifunctors applied to a pair of objects $(X,(Z,e))$ with $X:[\CC,\D']$ and $(Z,e):\Ptd(\CC)$
  ($X$ for the argument symbolized by $-$ and $(Z,e)$ for the argument symbolized by $\sim$)
  yield $HX\hcomp Z$ and $H(X\hcomp Z)$, thus $\th{X,(Z,e)}:HX\hcomp Z\to H(X\hcomp Z)$ in $[\CC,\D]$.
  Being ``homomorphic'' of $\th{}$ in the second argument for us means satisfying the equations\footnote{See \coqident[theta_Strength1_int_nicer]{SubstitutionSystems.Signatures}{04e120e08a8d781ab7987bcb9001210b} and \coqident[theta_Strength2_int_nicest]{SubstitutionSystems.Signatures}{0d81eb5ac0beefdc398b56023a239436}.}
  $\th{X, \idwt{\Ptd(\CC)}} = \idwt{}_{HX}$ and
  \[  \th{X, (Z' \hcomp Z, e' \hcomp e)} = H(\alpha^{-1}_{X,Z',Z}) \vcomp   \th{X \hcomp Z', (Z,
      e)} \vcomp (\th{X, (Z', e')} \hcomp Z) \enspace , \]
  where $\alpha_{X,Y,Z} : X \hcomp (Y \hcomp Z) \simeq (X \hcomp Y) \hcomp Z$ is the associator.
  The equation is illustrated by the diagram in \cref{fig:sndstrengthlawsigs}.
  When $\th{}$ is a strength for $H$, the pair $(H,\th{})$ is called a \fat{signature with strength}.
  \begin{figure}[tb]
  \[
  \begin{xy}
   \xymatrix@C=0pt@R=25pt{
                  **[l]H X \hcomp Z' \hcomp Z  \ar[rr]^{\th{X,(Z' \hcomp Z, e' \hcomp e)}} \ar[d]_{\th{X, (Z', e')} \hcomp Z} & &  **[r]H(X \hcomp (Z' \hcomp Z)) \\
                  **[l] H(X \hcomp Z')\hcomp Z \ar[rr]^{\th{X \hcomp Z', (Z, e)}} & &   **[r]H((X \hcomp Z')\hcomp Z) \ar[u]_{H(\alpha^{-1}_{X,Z',Z})}
     }
  \end{xy}
\]
\vspace{-0.4cm}
\caption{Second strength law for signatures}\label{fig:sndstrengthlawsigs}
\vspace{-0.4cm}
\Description{}
\end{figure}
\end{definition}
Analogously to \cref{fig:othermonadlaws}, the upper left corner can be
associated either way for the type-checking of arrows out of it. On
the right-hand side, we cannot escape using the ``monoidal
isomorphism'' $\alpha$
witnessing associativity since the application of $H$ breaks the
convertibility of the types of arrows into $H(X \hcomp (Z' \hcomp Z))$
and $H((X \hcomp Z')\hcomp Z)$. Since $X$ is not assumed to be an endomorphism,
we use the notion of ``monoidal isomorphism'' by analogy only. Mathematically,
the arrow to the right is just the identity.
Approximatively, the diagram expresses that $\th{}$ with a composition in the second argument is the composition
of instances of $\th{}$ for each of the composites in the second argument.
This looks like a homomorphism of monoids, with $\Ptd(\CC)$ equipped with the identity
and the associative operation of composition, and the first equation
yields an identity, but the second equation varies also the first argument to $\th{}$.
In \cref{sec:understanding} we shed more light on this notion and in which sense this is ``homomorphic''.

In the case where $\D$ and $\D'$ are just $\CC$, all functor
compositions take place in $[\CC,\CC]$, and we precisely get back the
strength concept of
\citep{DBLP:journals/tcs/MatthesU04}, with the formalization-dictated
addition of the monoidal isomorphism that is part of the
formulation of \citep{DBLP:conf/types/AhrensM15}. In the formalization, we follow
\citep{DBLP:conf/types/AhrensM15} and have all monoidal isomorphisms
explicit. A contribution of the present paper is the
identification of the ``widening'' of the concept to
three possibly different parameter categories $\CC$, $\D$ and $\D'$,
as dictated by the necessity of a modular definition of the strength for
the signature functor generated by a \emph{multi-sorted} binding signature.
This formal widening is a conceptual step, but its implementation was direct
on the basis of the formalization provided by \citep{DBLP:conf/types/AhrensM15}.
Basically, all the proofs worked without changes other than the adaptation to the widening
of the statement. Such a ``refactoring'' step is thus painless given a full formalization.

It does not seem meaningful for us to form fixed points with signatures
with strength, unless the three parameter categories are identical:
the prominent role of pointed endofunctors on $\CC$ in the laws makes
them unapplicable for the analysis of other functors that might arise
as fixed points. So, otherwise, we consider those $H$ and their
strength only as building blocks.

As the notion of strength for $H$ appearing in the
definition of HSS is unaltered, we can simply use
\citep[Theorem 10]{DBLP:journals/tcs/MatthesU04}, formalized as
\citep[Theorem 26]{DBLP:conf/types/AhrensM15} that now reads as:
\begin{lemma}[\coqident{SubstitutionSystems.MonadsFromSubstitutionSystems}{Monad_from_hss}]\label{lem:monadsfromhsswithstrength}
  Let $\CC$ be a category with binary coproducts, and let $H$ be an endofunctor on
  $[\CC,\CC]$ and let $\th{}$ be such that $(H,\th{})$ is a signature with
  strength. If an $(\constfunctor{\Id{\CC}}+H)$-algebra 
  $(T,[\eta,\tau])$ is an HSS, then the definition
  \[\join \eqdef \gst{}{1_{(T,\eta)}} : T\hcomp T\to T \] yields a monad $(T,\eta,\join)$
  in monoid form.
\end{lemma}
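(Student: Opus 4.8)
The plan is to recognize the statement as the specialization of our framework to the endofunctor case, where it coincides with the substitution-system theorem of Matthes and Uustalu. Concretely, when $\D$ and $\D'$ both equal $\CC$, a signature with strength in the sense of \cref{def:strength} is exactly a signature with strength in the sense of \citep{DBLP:journals/tcs/MatthesU04} (modulo the explicit monoidal isomorphisms introduced in the formulation of \citep{DBLP:conf/types/AhrensM15}), and the notion of HSS of \cref{def:hss} coincides with theirs. The lemma is therefore a restatement of \citep[Theorem 10]{DBLP:journals/tcs/MatthesU04}, formalized as \citep[Theorem 26]{DBLP:conf/types/AhrensM15}, and my intention is to invoke that result directly rather than reprove it.

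For the reader who wants to see where the monad laws come from, I would proceed as follows. Setting $\join \eqdef \gst{}{1_{(T,\eta)}}$, i.e.\ the unique $h : T \hcomp T \to T$ solving the bracket diagram of \cref{fig:hss} at the pointed morphism $1_{(T,\eta)}$, the first monad law $\join \vcomp (\eta \hcomp T) = 1_T$ is read off immediately from the triangle of that diagram, since $U(1_{(T,\eta)}) = 1_T$. The remaining two laws of \cref{fig:othermonadlaws}---the right unit law $\join \vcomp (T \hcomp \eta) = 1_T$ and associativity---are established through the uniqueness clause of the HSS property: for each law one shows that both of its sides, when plugged into the defining bracket equations for an appropriately chosen instance of $\gst{}{\,\cdot\,}$, satisfy those equations, and then concludes equality from uniqueness. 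This is exactly the point at which the two strength laws of \cref{def:strength} are consumed: the identity law $\th{X, \idwt{}} = \idwt{}_{HX}$ feeds the unit computations, while the composition law of \cref{fig:sndstrengthlawsigs} is what makes the $\tau$-part of the bracket equation reproduce itself under composition, and is hence essential for associativity.

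The main obstacle, mathematically, is the associativity law, whose verification in Matthes and Uustalu is the most delicate and is precisely where both strength laws are needed together; this is why we have insisted on the full \cref{def:strength} and not merely on the prestrength of \cref{def:prestrength}. For our purposes, however, the substantive obstacle has already been cleared: because $H$ here is a genuine endofunctor on $[\CC,\CC]$, the widening to three categories $\CC$, $\D$, $\D'$ introduced in \cref{sec:hss} collapses away, so none of the existing proof steps is affected and the earlier formalization transfers without change. The only thing I would verify carefully is that the bookkeeping of the explicit associators matches the convention already fixed in the formalization, so that the statement above is literally the instance of \citep[Theorem 26]{DBLP:conf/types/AhrensM15} obtained by taking all three categories to be $\CC$.
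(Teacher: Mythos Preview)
Your proposal is correct and matches the paper's approach exactly: the paper does not prove this lemma at all but simply cites \citep[Theorem 10]{DBLP:journals/tcs/MatthesU04} and its formalization \citep[Theorem 26]{DBLP:conf/types/AhrensM15}, noting that because the notion of strength appearing in the definition of HSS is unaltered in the endofunctor case, the existing result applies verbatim. Your additional sketch of how the strength laws feed the unit and associativity verifications is accurate and goes beyond what the paper provides.
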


In order to construct the strength from the building blocks of the generated signature functor $H$ in \cref{sec:endofunctor},
we will also use the concept of pointed distributive law from \citep[Definition 10]{signatures_to_monads}.
We avoid monoidal isomorphisms and give the following definition.
\begin{definition}[\coqident{SubstitutionSystems.SignatureExamples}{DistributiveLaw}]\label{def:distrlaw}
 Given $G:[\CC,\CC]$ for a category $\CC$, a \fat{pointed distributive law}
   is a natural transformation $\delta: G\hcomp U{\sim}\to
   U{\sim}\hcomp G$ of functors $\Ptd(\CC)\to[\CC,\CC]$ such that
  $\delta_\Id{\Ptd(\CC)} =\idwt{}_G$ and $\delta_{(Z' \hcomp Z, e' \hcomp e)}= Z'\hcomp\delta_{(Z,e)}\vcomp\delta_{(Z',e')}\hcomp Z$.\footnote{The second equation is \coqident[delta_law2_nicer]{SubstitutionSystems.SignatureExamples}{397731721c9f07e371ba59058f55b277}.}
\end{definition}

  We now sketch the construction of the strength for the obtained
  signature functor. For $\option_s$, construct a pointed distributive law, compose those distributive laws to
  obtain one for $\option~\ell$ and then generate the strength for
  precomposition with $\option~\ell$. Pointwise post-composition with a fixed functor
  (here with $\proj{t}$) generically
  allows us to construct a strength from the given one, which is good for $F^a$. This is a
  strength in the wide sense introduced in this paper.

  The product of finitely many such $F^{a_i}$ can be treated easily by
  using the corresponding construction of strengths for binary
  products iteratively. Let us mention the following implementation detail:
  the constructed signature with strength ought to have, as first
  component, the product of those $F^{a_i}$ w.\,r.\,t. \emph{convertibility}
  and not just provably. For one construction step, there is no
  concern, but this has to hold even for the iteration process,
  so the definitions have to go through the structure twice (as seen
  in \coqident{SubstitutionSystems.MultiSorted_alt}{Sig_exp_functor_list}).

The strength construction for $F^{(\vec{a},t)}$ is then dealt with by another instance of the strength construction
  for pointwise postcomposition with a fixed functor, this time $\hat t$.
  Canonically, the strength carries over to coproducts and thus to the
  $H$ associated to a multi-sorted binding
  signature.

  Following the steps described above, we construct a signature with
  strength from the given multi-sorted binding signature.  We have
  thus a full specification: starting from a multi-sorted binding
  signature, we construct a signature with strength. The only
  remaining task is to construct an
  $(\constfunctor{\Id{\CC}}+H)$-algebra representing terms that is, in particular, an
  HSS. This then provides us with $\prejoin$ as monad multiplication
  and thus a certified monadic substitution operation.

\section{From Multi-Sorted Binding Signatures to Certified Substitution}\label{sec:msbstocs}
Having constructed the signature with strength $(H,\theta)$ from a multi-sorted binding
signature, we are now ready to construct the monadic substitution
operation. The key theorem for this is:
\begin{theorem}[\coqident{SubstitutionSystems.MonadicSubstitution_alt}{TermMonad}]\label{thm:syntax-from-sig-with-strength}
  Let $(H,\theta)$ be a signature with strength, where
  $H : [\CC,\CC] \to [\CC,\CC]$ is \omegacocont and
  $\CC$ has binary coproducts, an initial object, and colimits of
  chains. Then the initial $(\constfunctor{\Id{\CC}}+H)$-algebra
  exists and is equipped with a monad structure.
\end{theorem}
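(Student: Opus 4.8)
The plan is to prove the two conclusions---existence of the initial algebra and the monad structure---by assembling the building blocks developed earlier, so that all the genuinely new work is confined to checking hypotheses. Concretely, I would proceed in three stages: (i) construct the initial $(\constfunctor{\Id{\CC}}+H)$-algebra by Adámek's theorem; (ii) promote it to a heterogeneous substitution system for $(H,\th{})$ via the generalized iteration principle underlying \citep{signatures_to_monads}; and (iii) read off the monad from \cref{lem:monadsfromhsswithstrength}.

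For stage (i), I would first note that binary coproducts, the initial object, and colimits of chains are all computed pointwise in $[\CC,\CC]$, so this functor category inherits the three closure properties assumed of $\CC$. I would then establish that $\constfunctor{\Id{\CC}}+H$ is \omegacocont: the constant functor $\constfunctor{\Id{\CC}}$ preserves colimits of chains because an $\omega$-chain is connected and the colimit of a constant connected diagram is its value; and a pointwise binary coproduct of \omegacocont functors is again \omegacocont because colimits commute with colimits, so binary coproducts distribute over colimits of chains. Combined with the hypothesis that $H$ is \omegacocont, this shows $\constfunctor{\Id{\CC}}+H$ is \omegacocont, and \citet{Adamek74}'s theorem then yields the initial algebra as the colimit of the initial chain. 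I write $(T,\alpha)$ for it and split $\alpha=[\eta,\tau]$ as in \cref{sec:hetsubst}.

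Stage (ii) is the crux and the step I expect to be the main obstacle. We must show that $(T,[\eta,\tau])$ is an HSS in the sense of \cref{def:hss}: for each pointed morphism $f:(Z,e)\to(T,\eta)$ there is a unique $h:T\hcomp Z\to T$ fitting the diagram of \cref{fig:hss}. Bare initiality does not produce $h$, since the diagram couples $\tau$ and $\th{}$ with a recursive occurrence of $h$ beneath $H$; the correct instrument is the generalized, Mendler-style iteration scheme that powers \citep{signatures_to_monads} (itself descended from \citep{DBLP:journals/tcs/MatthesU04}). The \omegacocontinuity of $H$ is exactly what makes this iteration applicable, while the two strength laws of \cref{def:strength} are what force $h$ to be simultaneously well-defined and unique. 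Since the data of a strength here (the case $\D=\D'=\CC$) is literally the notion used by that machinery, I would reuse the construction unchanged rather than reprove it.

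For stage (iii), with the HSS structure in hand I set $\join\eqdef\gst{}{1_{(T,\eta)}}$ and invoke \cref{lem:monadsfromhsswithstrength}, whose hypotheses---$\CC$ with binary coproducts, $H$ an endofunctor on $[\CC,\CC]$ carrying a genuine strength---hold by assumption. This delivers the monad $(T,\eta,\join)$ in monoid form on the initial algebra, completing the proof. The only routine bookkeeping left is tracking the convertibilities noted after \cref{def:hss} so that the arrows in \cref{fig:hss} type-check.
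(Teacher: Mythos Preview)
Your three-stage decomposition---Ad\'amek's theorem for the initial algebra, Mendler-style generalized iteration to upgrade it to an HSS, then \cref{lem:monadsfromhsswithstrength} for the monad---is correct and is exactly the route the paper takes (deferring the details to \citep{signatures_to_monads} and treating the theorem as a black box). Your unpacking of stage~(i), in particular the \omegacocontinuity of $\constfunctor{\Id{\CC}}+H$ via \cref{lem:ex-pres-colim}\ref{pres-colim-const} and \ref{lem:coprod_cocont}, and your identification of the Mendler scheme as the crux of stage~(ii), are both on target.
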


The constructions in the proof of this theorem were given by
\citep{signatures_to_monads} and go via the heterogeneous
substitution systems of \cref{lem:monadsfromhsswithstrength}.
For the purposes of the present paper we treat
\cref{thm:syntax-from-sig-with-strength} as a black box, but stating
it this way makes it applicable to any abstract signature $(H,\theta)$
satisfying the assumptions. This way the result is decoupled from a
specific notion of concrete signature and applies both to the binding
signatures of \citep{signatures_to_monads} and those in
\cref{sec:mssigs}. It should also apply to other languages which are
not instances of these notions of signatures. An example is the
$\lambda$-calculus with explicit flattening as in
\citet{DBLP:journals/tcs/MatthesU04}, but \omegacocontinuity of this
has not yet been addressed properly and is left as an open question. We now turn
our attention to proving that $H$, as defined in
\cref{sec:endofunctor}, is \omegacocont.

\subsection{Proving \texorpdfstring{$\omega$-Cocontinuity}{omega-Cocontinuity} of the Signature Functor}\label{sec:omegacocont}

We start off by recalling the definition of \omegacocontinuity in
\UniMath. Colimits are parametrized by diagrams over graphs, as
suggested by \citet[p.~71]{maclane}. A functor $F : \CC \to \D$
\emph{preserves} colimits of shape $G$ if, for any diagram $d$ of
shape $G$ in $\CC$, and any cocone $a$ under $d$ with tip $C$, the
cocone $F\,a$ is colimiting for $F\,d$ whenever $a$ is colimiting for
$d$.

A functor is called \emph{\omegacocont} (\coqident{CategoryTheory.Chains.Chains}{is_omega_cocont}) if it preserves
colimits of diagrams of the shape
\[
  \begin{xy}
    \xymatrix@C=4pc{
      A_0 \ar[r]^{f_0} & A_1  \ar[r]^{f_1} & A_2  \ar[r]^{f_2} & \ldots
    }
  \end{xy}
\]
that is, diagrams on the graph where objects are natural numbers and
where there is a unique arrow from $m$ to $n$ if and only if
$1 + m = n$. We refer to diagrams of this shape as \emph{chains}.
Actually, in \UniMath, the type of arrows from $m$ to $n$ is
\emph{defined} to be the type of proofs that $1 + m = n$, exploiting
that the type of natural numbers is a set. 

In order to prove that the signature functor $H$ of
\cref{sec:endofunctor} satisfies \coqdocdefinition{is\_omega\_cocont}
we will apply lemmas that decompose the goal into smaller pieces. In
particular, the proof will boil down to proving that ``postcomposition
with $\proj{s}$'' and ``postcomposition with $\hat{t}$'' are
\omegacocont. Along the way we identify sufficient conditions
on $\CC$.

The general form of the following lemmas were added to \UniMath by
\citep{signatures_to_monads}. For simplicity we only recall them in
the form that we need them in this paper:

\begin{lemma}[Examples of \omegacocont functors]\label{lem:ex-pres-colim}\hfill
  \begin{enumerate}
    \item\label{pres-colim-const} Any constant functor
      $\constfunctor{d} : \CC \to \D$ is \omegacocont
      (\coqident{CategoryTheory.Chains.OmegaCocontFunctors}{is_omega_cocont_constant_functor}).
    \item\label{pres-colim-comp} The composition of
      \omegacocont functors is again \omegacocont
      (\coqident{CategoryTheory.Chains.OmegaCocontFunctors}{is_omega_cocont_functor_composite}).
    \item\label{theorem:times_cocont} Let $\CC$ and $\D$ be categories
      with specified binary products and further assume that
      $d\times{-}$ is \omegacocont for all $d : \D_0$. The
      binary product, $F \times G : \CC \to \D$, of
      \omegacocont functors $F,G : \CC \to \D$ is
      \omegacocont
      (\coqident{CategoryTheory.Chains.OmegaCocontFunctors}{is_omega_cocont_BinProduct_of_functors}).
    \item\label{lem:coprod_cocont} Let $\CC$ be a category and $\D$ a
      category with specified coproducts. Given an $I$-indexed family
      of functors $F_i : \CC \to \D$ the coproduct $\coprod_{i : I}
      F_i : \CC \to \D$ is \omegacocont
      (\coqident{CategoryTheory.Chains.OmegaCocontFunctors}{is_omega_cocont_coproduct_of_functors}).
    \item\label{theorem:precomp_cocont} Let $\CC$ have colimits of
      chains and let $F : \A \to \B$ be a functor, then the functor
      ``precomposition with $F$'', $\_ \hcomp F : [\B,\CC] \to
      [\A,\CC]$, is \omegacocont
      (\coqident{CategoryTheory.Chains.OmegaCocontFunctors}{is_omega_cocont_pre_composition_functor}).
  \end{enumerate}
\end{lemma}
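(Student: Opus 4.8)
The plan is to unfold the definition of preservation of chain colimits and treat the five clauses one at a time, since they are largely independent. For \cref{pres-colim-const}, I would observe that a constant functor $\constfunctor{d}$ sends any chain to the constant chain on $d$, with every transition map equal to $1_d$, and sends any cocone to one all of whose legs are $1_d$. It then remains to note that the colimit of a constant diagram over a \emph{connected} graph is the value itself with identity legs, and that the chain graph $0 \to 1 \to 2 \to \cdots$ is connected; hence the image cocone is colimiting. For \cref{pres-colim-comp}, the argument is purely formal: given $F$ and $G$ preserving chain colimits and a colimiting cocone $a$ for a chain $d$, the cocone $Fa$ is colimiting for $Fd$, whence $G(Fa)$ is colimiting for $G(Fd)$; since $G(Fd) = (G \hcomp F)\,d$ and $G(Fa) = (G \hcomp F)\,a$, the composite preserves the colimit.

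The product case, \cref{theorem:times_cocont}, is where the real work lies, and I expect it to be the main obstacle. The idea is to view the binary product as a bifunctor and to show it preserves chain colimits \emph{separately} in each variable: in the right variable this is exactly the hypothesis that $d \times {-}$ is \omegacocont for all $d : \D_0$, and in the left variable it follows by the symmetry of products. Given a chain $d$ with colimiting cocone of tip $C$, applying the \omegacocont functors $F$ and $G$ yields colimiting cocones for $Fd$ and $Gd$ with tips $FC$ and $GC$. To compute the colimit of the diagonal chain $n \mapsto Fd_n \times Gd_n$, I would pass to the diagram indexed by the square of the chain graph, $(m,n) \mapsto Fd_m \times Gd_n$: the diagonal embedding of the chain graph into its square is cofinal, so the two colimits agree, and the double colimit can be computed one variable at a time using the separate cocontinuity just established, giving $FC \times GC = (F \times G)\,C$. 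Making the cofinality argument and the interchange of the two colimits precise---rather than the routine bookkeeping of cocone maps---is the delicate part.

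For \cref{lem:coprod_cocont}, I would use that an $I$-indexed coproduct is itself a colimit and that colimits commute with colimits: taking a chain $d$ with colimiting cocone of tip $C$, and using that (as in the formalized statement) each $F_i$ preserves this colimit, one has $\coprod_{i} F_i\,C \cong \coprod_{i} \mathrm{colim}_n F_i\,d_n \cong \mathrm{colim}_n \coprod_{i} F_i\,d_n$, which exhibits the image cocone of $\coprod_i F_i$ as colimiting. Finally, for \cref{theorem:precomp_cocont}, the key fact is that colimits in a functor category $[\B,\CC]$ are computed pointwise, which is precisely where the hypothesis that $\CC$ has colimits of chains enters. Given a chain $(G_n)_n$ in $[\B,\CC]$, precomposition with $F : \A \to \B$ merely reindexes the pointwise data along $F$: at each $a : \A_0$ the value $((\mathrm{colim}_n G_n) \hcomp F)\,a = (\mathrm{colim}_n G_n)(F a) = \mathrm{colim}_n G_n(F a)$ is visibly the pointwise colimit of the chain $(G_n \hcomp F)_n$, so precomposition preserves it with no further computation.

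In summary, I expect clauses \cref{pres-colim-const,pres-colim-comp,theorem:precomp_cocont} to be essentially formal once the pointwise description of functor-category colimits and the connectedness of the chain graph are in place, the coproduct clause to reduce to the commutation of colimits, and the binary-product clause to concentrate all the genuine difficulty in the cofinality of the diagonal and the interchange of iterated colimits.
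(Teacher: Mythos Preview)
Your arguments are mathematically sound, and in fact the paper does not supply a proof of this lemma at all: it merely recalls these results as already available in \UniMath from prior work (\emph{signatures to monads}), pointing to the formalized identifiers. So there is no ``paper's own proof'' to compare against beyond the formalization itself; your sketches match the standard categorical arguments one would expect to underlie such a formalization, including the cofinality-of-the-diagonal route for the product clause.

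One small point worth flagging: you correctly read into item~(4) the hypothesis that each $F_i$ is \omegacocont, which the paper's statement omits but the formalized lemma \texttt{is\_omega\_cocont\_coproduct\_of\_functors} does require. Without that hypothesis the claim is false in general, so your parenthetical ``as in the formalized statement'' is exactly the right move.
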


By using point~\ref{lem:coprod_cocont} we can decompose
\omegacocontinuity of $H$ to showing that each $F^{(\vec{a},t)}$
is \omegacocont. By point~\ref{pres-colim-comp} and repeated
application of \ref{theorem:times_cocont} we now need to show
that $\hat{t} \hcomp \_$ and $F^{a_i}$ are
\omegacocont. Point~\ref{pres-colim-const} is used to handle
the case when $\vec{a}$ is empty. By points
\ref{pres-colim-comp} and \ref{theorem:precomp_cocont} the fact that
each $F^{a_i}$ is \omegacocont is reduced to \omegacocontinuity of
$\proj{s} \hcomp \_$.

As limits and colimits are computed pointwise in the functor category
$[\ccsort,\ccsort]$ we see that the assumptions on $\CC$ for making
the above argument go through is that $\CC$ has specified binary
products, coproducts, colimits of chains, and that $F \times{-}$ is
\omegacocont for all $F : [\ccsort,\CC]$.

The important facts that are not yet covered by the above argument are
the \omegacocontinuity of $\proj{t} \hcomp \_$ and $\hat{t} \hcomp \_$.
The key for proving these is:

\begin{lemma}[\coqident{CategoryTheory.Chains.OmegaCocontFunctors}{left_adjoint_cocont}]
  \label{lemma:left_adjoint_preserves_colimits}
  If $F : \CC \to \D$ is a left adjoint then it preserves colimits.
\end{lemma}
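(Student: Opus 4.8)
The plan is to prove that any left adjoint $F : \CC \to \D$ preserves all colimits, of which $\omega$-cocontinuity is a special case. This is the classical result that left adjoints preserve colimits (``LAPC''), so the strategy is the standard one, adapted to the \UniMath\ formulation where colimits are taken over diagrams on graphs. Let $G : \D \to \CC$ be a right adjoint to $F$, with the natural bijection $\D(Fc, d) \simeq \CC(c, Gd)$ witnessing the adjunction.

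First I would fix a diagram $d$ of some shape $G$ (a graph) in $\CC$ and a cocone $a$ under $d$ with tip $C$ that is colimiting. The goal is to show that $F\,a$ is a colimiting cocone for $F\,d$ with tip $FC$. I would establish this by verifying the universal property directly: given any competing cocone $b$ under $F\,d$ with tip $E$ in $\D$, I must produce a unique mediating morphism $FC \to E$. The key move is to transpose across the adjunction. Each leg $b_v : F(d_v) \to E$ corresponds, under the natural bijection, to a morphism $\widehat{b_v} : d_v \to GE$ in $\CC$; naturality of the adjunction isomorphism, together with the fact that $b$ is a cocone in $\D$, ensures that these transposed legs $\widehat{b_v}$ assemble into a cocone under $d$ in $\CC$ with tip $GE$.

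Since $a$ is colimiting for $d$ by hypothesis, there is a unique mediating morphism $m : C \to GE$ compatible with the transposed cocone. Transposing $m$ back across the adjunction yields a morphism $\widetilde{m} : FC \to E$, and I would check---again by naturality of the transposition---that $\widetilde{m}$ is compatible with the cocone $F\,a$ and the competing cocone $b$. For uniqueness, I would run the same transposition argument in reverse: any mediating morphism $FC \to E$ for the $F\,a$-cocone transposes to a mediating morphism $C \to GE$ for the $a$-cocone, which must equal $m$ by the uniqueness clause of the colimit, and so the original morphism must equal $\widetilde{m}$ since transposition is a bijection. Specializing the shape $G$ to a chain gives $\omega$-cocontinuity of $F$.

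I expect the main obstacle to be bureaucratic rather than conceptual: carefully tracking the naturality-square bookkeeping so that the transposed legs genuinely form a cocone, and that compatibility with the cocone maps is preserved in both directions. In the \UniMath\ setting, where the adjunction is packaged as unit/counit data and colimits are manipulated through the \coqdocdefinition{isColimCocone} predicate, the real effort lies in threading the adjunction isomorphisms through the formal definitions of cocone and colimiting cocone, and in confirming that the two transpositions are mutually inverse at the level of mediating morphisms. None of this is deep, but it is the kind of equational plumbing that is easy to get wrong on paper and tedious to discharge in the formalization.
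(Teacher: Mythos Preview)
Your argument is the standard ``left adjoints preserve colimits'' proof via transposition across the hom-set bijection, and it is correct. The paper, however, does not actually prove this lemma: it simply cites it as a classical result already present in the \UniMath library (``This classic result in category theory could also be found in \UniMath''), so there is no paper proof to compare against beyond noting that your sketch is exactly the textbook argument one would expect to underlie the library formalization.
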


This classic result in category theory could also be found in \UniMath
and it implies that any left adjoint functor is
\omegacocont. However, \UniMath did not provide any lemmas
about the \omegacocontinuity of postcomposition with various
functors. We hence formalized the following lemma:

\begin{lemma}[\coqident{CategoryTheory.Adjunctions.Core}{is_left_adjoint_post_composition_functor}]
  If $F : \A \to \B$ is a left adjoint then $F \hcomp \_ : [\CC,\A]
  \to [\CC,\B]$ is a left adjoint.
\end{lemma}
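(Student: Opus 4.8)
The plan is to exhibit an explicit right adjoint to the postcomposition functor $F \hcomp \_$ and to build the unit and counit of the new adjunction by whiskering those of $F$. First I would unfold the hypothesis that $F$ is a left adjoint: this supplies a right adjoint $R : \B \to \A$, a unit $\eta : \Id{\A} \to R \hcomp F$, a counit $\varepsilon : F \hcomp R \to \Id{\B}$, and the two triangle identities relating them. The natural candidate for the right adjoint of $F \hcomp \_$ is then postcomposition with $R$, namely $R \hcomp \_ : [\CC,\B] \to [\CC,\A]$. So I would set up the adjunction $(F \hcomp \_) \dashv (R \hcomp \_)$ and provide the remaining data.

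Next I would construct the unit and counit of this adjunction by right whiskering. For $G : [\CC,\A]$, the component of the new unit at $G$ is $\eta \hcomp G : G \to (R \hcomp F) \hcomp G$, and for $K : [\CC,\B]$ the component of the new counit at $K$ is $\varepsilon \hcomp K : (F \hcomp R) \hcomp K \to K$; here I use that functor composition is strictly associative in \UniMath, so that $(R \hcomp F) \hcomp G$ is literally $R \hcomp (F \hcomp G)$, that is, the image of $G$ under $R \hcomp \_$ applied to the image of $G$ under $F \hcomp \_$. I would then check that these assignments are natural in $G$ and in $K$. Since a natural transformation between functors valued in a functor category is an equation tested componentwise, naturality reduces, at each object $c : \CC$, to naturality of $\eta$ (resp.\ $\varepsilon$) evaluated at $G\,c$ (resp.\ $K\,c$), which holds by assumption.

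Finally I would verify the two triangle identities for $(F \hcomp \_) \dashv (R \hcomp \_)$. Because equality of natural transformations is componentwise, each triangle identity, evaluated first at an arbitrary functor and then at an arbitrary $c : \CC$, collapses to the corresponding triangle identity of $F \dashv R$ instantiated at the object $G\,c$ (resp.\ $K\,c$), which is given. Conceptually, the whole argument is an instance of the fact that a $2$-functor preserves adjunctions: postcomposition is the morphism part of the $2$-functor $[\CC,\_]$ on $\CAT$, and it carries the adjunction $F \dashv R$ to the adjunction $F \hcomp \_ \dashv R \hcomp \_$.

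I do not expect any genuine mathematical difficulty here; the only step demanding care is purely bureaucratic. It consists in making the whiskered transformations and the triangle identities type-check, that is, in keeping track of the (strict, hence identity-on-the-nose) reassociations of functor composition so that the source and target of each whiskered $\eta$ and $\varepsilon$ match exactly the objects produced by the functors $F \hcomp \_$ and $R \hcomp \_$. Once the data is arranged so that these reassociations are definitional, all the obligations discharge componentwise from the original adjunction $F \dashv R$.
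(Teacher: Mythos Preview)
Your proposal is correct and follows the standard construction: the paper does not spell out a proof in the text, only referencing the \Coq formalization \coqident{CategoryTheory.Adjunctions.Core}{is_left_adjoint_post_composition_functor}, and your approach---taking $R \hcomp \_$ as right adjoint and whiskering the unit, counit, and triangle identities of $F \dashv R$---is exactly the direct construction one expects that formalization to implement. Your closing remark that this is an instance of ``$2$-functors preserve adjunctions'' for $[\CC,\_]$ is the right conceptual summary.
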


Hence, if $F$ is a left adjoint, then $F \hcomp \_$ is \omegacocont
(\coqident{CategoryTheory.Chains.OmegaCocontFunctors}{is_omega_cocont_post_composition_functor}).
As remarked without proof in \cref{sec:endofunctor}, the functor
$\hat{t}$ is left adjoint to $\proj{t}$, so $\hat{t} \hcomp \_$ is
\omegacocont. We will now establish this adjunction as well as the
fact that $\proj{t}$ also has a right adjoint, which implies that
$\proj{t} \hcomp \_$ is \omegacocont.

In order to provide some intuition for the below proof, note
that $\proj{t}$ has a more abstract description. If we write $t$ also
for the map $1 \to \sort$ picking out the sort $t$, then
``precomposition with $t$'' is a functor $\_ \hcomp t : \ccsort \to
\CC^{1}$. It is easy to see that this is isomorphic to
$\proj{t}$, so if we assume that $\CC$ has suitable (co)limits
then it will have both left and right adjoints given by Kan
extension. This abstract argument indicates that we should indeed be
able to establish that $\proj{t}$ has both left and right adjoints
given that $\CC$ has suitable (co)limits. However, we can characterize
these much more concretely as follows:

\begin{lemma}[\coqident{SubstitutionSystems.MultiSorted_alt}{is_left_adjoint_hat}]\label{lem:hat}
  Let $\CC$ be a category with specified set-indexed coproducts. The
  functor $\proj{t}$ has a left adjoint $\hat{t} : \CC \to \ccsort$
  defined as
  \[
    \hat{t}~X~s \eqdef \coprod_{(t = s)} X\enspace.
  \]
\end{lemma}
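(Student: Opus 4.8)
The plan is to exhibit the unit and counit of the adjunction $\hat{t} \dashv \proj{t}$ explicitly, then verify the triangle identities. Recall that $\hat{t}~X~s \eqdef \coprod_{(t=s)} X$ and $\proj{t}~Y \eqdef Y~t$. To establish the adjunction it suffices to give a natural isomorphism $\ccsort(\hat{t}~X, Y) \simeq \CC(X, \proj{t}~Y)$, but the cleaner route in a formalization is to supply the unit $\et{} : \Id{\CC} \naturalto \proj{t} \circ \hat{t}$ and counit $\varepsilon : \hat{t} \circ \proj{t} \naturalto \Id{\ccsort}$ and check the two triangle laws.

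\emph{First} I would compute the two composites. We have $(\proj{t} \circ \hat{t})~X = \hat{t}~X~t = \coprod_{(t=t)} X$. Since the coproduct is indexed over the set $t = t$, and this type is inhabited by reflexivity, there is a canonical coprojection into the summand at $\idwt{}_t$; this gives the unit $\et{X} : X \to \coprod_{(t=t)} X$. \emph{Second}, for the counit, note $(\hat{t} \circ \proj{t})~Y~s = \coprod_{(t=s)} (Y~t)$, and I would define $\varepsilon_{Y,s} : \coprod_{(t=s)} (Y~t) \to Y~s$ by the universal property of the coproduct: given a proof $p : t = s$, transport along $p$ sends $Y~t$ to $Y~s$, and these maps cotuple to the required morphism. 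Naturality of both $\eta$ and $\varepsilon$ follows from naturality of the coproduct coprojections together with the functoriality of transport.

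\emph{The main obstacle} is the dependence on the proof-relevant indexing type $t = s$ rather than a mere decidable test: the counit must be well-defined independently of \emph{which} proof $p : t = s$ is used, and the triangle identities must hold up to the path algebra of transport. Concretely, for the triangle law $\proj{t}(\varepsilon_Y) \circ \et{\proj{t}~Y} = \idwt{}_{\proj{t}~Y}$, I would need that transporting $Y~t$ along the reflexivity proof $\idwt{}_t : t = t$ is the identity, which holds definitionally; the other triangle $\varepsilon_{\hat{t}~X} \circ \hat{t}(\et{X}) = \idwt{}_{\hat{t}~X}$ requires checking, at each index $s$ and each proof $p : t = s$, that the two transports compose to the identity, which reduces to path induction on $p$. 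Because $\sort$ is assumed to be a set, $t = s$ is a proposition, so all proofs are equal and the transported maps agree, eliminating any coherence worry that would arise at higher truncation level.

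Once the unit, counit, and triangle identities are in place, \cref{lemma:left_adjoint_preserves_colimits} immediately yields that $\hat{t}$ preserves colimits, and hence that $\hat{t} \circ \_$ is \omegacocont by the postcomposition lemma, completing the intended application. I expect the entire verification to be routine once the coproduct-over-a-path-type bookkeeping is set up, with the only genuine care needed in the transport manipulations guaranteed to cohere precisely because $\sort$ is a set.
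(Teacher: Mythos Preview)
Your argument is correct, but you take a different route from the paper. The paper establishes the adjunction directly via the hom-set bijection: since $\sort$ is discrete, a morphism $\hat{t}~A \to B$ in $\ccsort$ is a family of $\CC$-morphisms $\coprod_{(t=s)} A \to B~s$ indexed by $s$, and by the universal property of the coproduct this collapses to a single morphism $A \to B~t$, which is exactly $\hom_\CC(A,\proj{t}~B)$. You instead construct the unit and counit explicitly and verify the triangle identities, handling the transport bookkeeping along proofs $p : t = s$ by path induction and invoking that $\sort$ is a set for coherence. Both are standard and essentially dual packagings of the same content; the hom-set argument is shorter on paper and makes the role of the coproduct universal property transparent in one step, while your unit/counit presentation makes the data of the adjunction explicit and is often closer to what one actually builds in a proof assistant. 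Your final paragraph about colimit preservation and postcomposition is not part of the lemma itself but correctly identifies its downstream use.
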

\begin{proof}
  Since $\sort$ is a discrete category, the set $\hom_{\ccsort}(\hat{t}~A,B)$ is
  just a family of morphisms from $\hom_\CC(\hat{t}~A~s,B~s)$, for any
  $s : \sort$. Unfolding the definition of $\hat{t}$ gives us
  $\hom_\CC(\coprod_{(t = s)} A,B~s)$ which is nontrivial only when
  $t = s$. On the other hand
  $\hom_\CC(A,\proj{t}~B) \convert \hom_\CC(A,B~t)$, so these sets are
  clearly isomorphic.
\end{proof}

\begin{lemma}[\coqident{SubstitutionSystems.MultiSorted_alt}{is_left_adjoint_projSortToC}]\label{lem:underline}
  Let $\CC$ be a category with specified products. The functor
  $\proj{t}$ has a right adjoint $\underline{t} : \CC \to \ccsort$
  defined as
  \[
    \underline{t}~A~s \eqdef \prod_{(t = s)} A\enspace.
  \]
\end{lemma}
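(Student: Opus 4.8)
The plan is to dualize the proof of \cref{lem:hat}. Rather than computing the hom-set with $\hat{t}$ in its source, I would establish a natural isomorphism
\[ \hom_{\CC}(\proj{t}~B, A) \;\cong\; \hom_{\ccsort}(B, \underline{t}~A) \]
natural in $A : \CC$ and $B : \ccsort$, which is precisely what it means for $\underline{t}$ to be right adjoint to $\proj{t}$ (recall $\proj{t}$ is the functor, so here $\proj{t} \dashv \underline{t}$).

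First I would unfold both sides. On the left, $\proj{t}~B \convert B~t$, so the left-hand side is simply $\hom_{\CC}(B~t, A)$. On the right, since $\sort$ is a discrete category, a morphism $B \to \underline{t}~A$ in $\ccsort = [\sort,\CC]$ is just a family of $\CC$-morphisms $B~s \to \underline{t}~A~s$ indexed by $s : \sort$; unfolding the definition of $\underline{t}$, the $s$-component lies in $\hom_{\CC}(B~s, \prod_{(t=s)} A)$.

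The key step is to collapse this indexed family to its single relevant component. By the universal property of the product, $\hom_{\CC}(B~s, \prod_{(t=s)} A) \cong \prod_{(t=s)} \hom_{\CC}(B~s, A)$. Because $\sort$ is a set, the type $t = s$ is a proposition: when $s \neq t$ it is empty, so the product is the terminal object and the hom-set is a singleton, contributing nothing to the family; when $s = t$ it is contractible, so the product is isomorphic to $A$ and the component is $\hom_{\CC}(B~t, A)$. Hence the whole family over $s$ reduces to $\hom_{\CC}(B~t, A)$, matching the left-hand side. Exactly as in \cref{lem:hat}, this avoids any appeal to decidability of equality on $\sort$: the ``product-trick'' routes the branching through the proposition $t = s$ rather than through a case split.

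I expect the only real obstacle to be bookkeeping rather than mathematics: one must verify naturality of the displayed isomorphism in both $A$ and $B$, and take care that the product over the proposition $t = s$ collapses as claimed (empty product $=$ terminal object; product over a contractible index $\cong$ the object itself). The abstract remark preceding the lemma---that $\proj{t}$ is isomorphic to precomposition with the point $t : 1 \to \sort$, whose right adjoint is the right Kan extension along $t$---serves as a reassuring sanity check that a right adjoint of this shape exists; but the concrete description via $\prod_{(t=s)}$ is what we actually want, and it is obtained most directly by the dual hom-set computation above.
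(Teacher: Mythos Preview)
Your proposal is correct and is precisely the argument the paper intends: its proof reads in full ``This is dual to the proof of \cref{lem:hat},'' and you have written out exactly that dualization, replacing coproducts by products and swapping the variance of the hom-set computation.
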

\begin{proof}
  This is dual to the proof of \cref{lem:hat}.
\end{proof}

The reason we require the coproducts to be set-indexed is that this is
needed for the existence of $H$ (recall that $I$ is assumed to be a
set). For \cref{lem:underline} it would have sufficed to assume
existence of proposition-indexed products, however, the stronger
assumption of arbitrary (small) products is usually not a problem as the
existence of such products in concrete categories typically does not depend
on the homotopy level of the indexing type.
For instance, $\set$ has products indexed by
arbitrary (small) types, but only set-indexed coproducts.
Combining all of this we get:

\begin{theorem}[\protect{\coqident[omega_cocont_MultiSortedSigToSig]{SubstitutionSystems.MultiSorted_alt}{is_omega_cocont_MultiSortedSigToSignature}}]\label{theorem:sig_functor_cocont}
  Given a category $\CC$ with chosen products, set-indexed coproducts,
  colimits of chains, such that $F \times{-}$ is \omegacocont
  for all $F : [\ccsort,\CC]$, then the signature functor $H$
  associated to a binding signature is \omegacocont.
\end{theorem}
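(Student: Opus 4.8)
The plan is to dismantle $H$ along the recursive structure of its definition in \cref{def:functor_from_binding_sig}, reducing \omegacocontinuity of the whole functor to \omegacocontinuity of its atomic pieces, and then to discharge those pieces with the decomposition lemmas of \cref{lem:ex-pres-colim} together with the adjunctions of \cref{lem:hat} and \cref{lem:underline}. A preliminary observation underpins the entire argument: all products, coproducts, and chain colimits needed in $[\ccsort,\ccsort]$ are computed pointwise, i.e.\ object-wise in $[\ccsort,\CC]$, and those in turn pointwise in $\CC$. Hence the hypotheses on $\CC$---chosen products, set-indexed coproducts, colimits of chains, and \omegacocontinuity of $F\times{-}$ for every $F:[\ccsort,\CC]$---lift to precisely the functor categories in which the lemmas are applied.

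Since $H~X=\coprod_{i:I}F^{\arity(i)}(X)$ with $I$ a set, point~\ref{lem:coprod_cocont} reduces the goal to proving each $F^{(\vec a,t)}$ \omegacocont. Reading $F^{(\vec a,t)}~X=\hat{t}\hcomp(F^{a_1}X\times\cdots\times F^{a_m}X)$ as the composite of the pointwise finite-product functor followed by ``postcomposition with $\hat{t}$'', I would invoke point~\ref{pres-colim-comp}. The finite-product factor is handled by iterating point~\ref{theorem:times_cocont}, whose side condition that $d\times{-}$ be \omegacocont is exactly the hypothesis that $F\times{-}$ is \omegacocont for all $F:[\ccsort,\CC]$; the empty-list case is the constant functor of point~\ref{pres-colim-const}. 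The factor $\hat{t}\hcomp{-}$ is \omegacocont because $\hat{t}$ is a left adjoint by \cref{lem:hat}, so postcomposition with it is again a left adjoint and therefore preserves colimits of chains by \cref{lemma:left_adjoint_preserves_colimits}.

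It remains to show each $F^{a}$ \omegacocont, where $F^a~X=\proj{s}\hcomp X\hcomp\option~\ell$. I would factor $F^a$ as ``precomposition with $\option~\ell$'' followed by ``postcomposition with $\proj{s}$'' and apply point~\ref{pres-colim-comp} once more. Precomposition with $\option~\ell$ is \omegacocont by point~\ref{theorem:precomp_cocont}, which needs $\ccsort$ to have colimits of chains---inherited pointwise from $\CC$. Postcomposition with $\proj{s}$ is \omegacocont because $\proj{s}$ has the right adjoint $\underline{s}$ by \cref{lem:underline}, hence is itself a left adjoint, so postcomposition with it is a left adjoint and preserves colimits of chains (again by \cref{lemma:left_adjoint_preserves_colimits}). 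Assembling these reductions delivers \omegacocontinuity of $H$.

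I do not expect any single step to be genuinely hard, since each is an instance of a lemma already in hand; the real work is bookkeeping. The delicate point is checking that the product and coproduct structures invoked at each level really are the pointwise ones induced from $\CC$, so that the single hypothesis ``$F\times{-}$ is \omegacocont for all $F:[\ccsort,\CC]$'' is exactly what point~\ref{theorem:times_cocont} consumes as its side condition on the codomain $[\ccsort,\CC]$. A related nuisance, already flagged in \cref{ex:sigSTLC}, is that the formalized $F^{(\vec a,t)}$ deliberately avoids multiplying by a constant unit functor, so the iteration of point~\ref{theorem:times_cocont} must be matched to the actual shape of the finite product rather than to an idealized uniform $m$-fold product.
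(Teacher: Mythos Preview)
Your proposal is correct and follows essentially the same route as the paper: decompose $H$ via the coproduct lemma, then handle each $F^{(\vec a,t)}$ by composition and iterated binary products (with the constant case for empty $\vec a$), reduce to \omegacocontinuity of postcomposition with $\hat t$ and of each $F^a$, factor $F^a$ as precomposition with $\option~\ell$ followed by postcomposition with $\proj s$, and discharge both postcomposition obligations via the adjunctions of \cref{lem:hat} and \cref{lem:underline} together with \cref{lemma:left_adjoint_preserves_colimits}. Your remarks on the pointwise lifting of the hypotheses and on matching the actual shape of the finite product mirror the paper's own caveats.
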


In the formalization we provide a slightly different interface to the
construction than the one of \cref{theorem:sig_functor_cocont}. In
particular, we assume that $\CC$ has chosen terminal and initial
objects as well as binary (co)products. The reason for this is that,
even though these assumption follow from the assumptions of the
theorem, one can often give more direct constructions of these
assumptions in concrete categories.

\subsection{Instantiating the Framework and Examples}
\label{sec:examples}

We can now instantiate \cref{theorem:sig_functor_cocont} and then
\cref{thm:syntax-from-sig-with-strength} with $\CC = \set$. The
\UniMath library contains proofs that $\set$ is (co)complete as well
as lemmas about (co)limits in functor categories, so most assumptions
are easily satisfied. Proving that $\set$ has colimits is in fact one
of the places where we profit from working in univalent type theory.
This construction relies on set quotients which are not directly
available in intensional type theory, for details see
\cite[Section~3.3]{signatures_to_monads}. The last assumption of
\cref{theorem:sig_functor_cocont} is satisfied if $[\ccsort,\CC]$ has
exponentials. These are not as easy to compute in functor categories
as (co)limits, luckily \UniMath already has a formalization of them
when the target category is $\set$. We can hence instantiate the
framework and obtain some examples.

\begin{example}[\coqfile{}{SubstitutionSystems.STLC_alt}]
  Assuming that $\sort$ is closed under a binary operation
  $\Rightarrow : \sort \to \sort \to \sort$, we have the multi-sorted
  signature of STLC from \cref{example:stlc}. Applying the
  construction of \cref{sec:endofunctor} to get $H$ as in
  \cref{ex:sigSTLC}, and providing it, together with its strength
  $\theta$, to \cref{thm:syntax-from-sig-with-strength}, we obtain an
  initial algebra $\Lambda : [\set^{\sort},\set^{\sort}]$ for $H$
  together with a monadic substitution operation. Given
  $X, Y : \set^{\sort}$ and a morphism $f : X \to \Lambda~Y$, the
  substitution operation yields a morphism
  $\cfont{subst}~f : \Lambda~(X + Y) \to \Lambda~Y$. This is a
  parallel substitution, replacing all occurrences of $X$ in one
  go. By instantiating with $1$ for $X$ we obtain an
  operation which just replaces the occurrence of one variable.

  The monad laws state that $\cfont{subst}$ is well-behaved and from
  them we can derive various standard laws for substitution. For
  example, the interchange law for commuting two substitutions can be
  derived generically for any monad $M$. By instantiation we then
  obtain that $\cfont{subst}$ satisfies this law. For details, see the
  formal development.
\end{example}

We have also implemented more complex languages, including Plotkin's
PCF~\citep{Plotkin77} and the pre-syntax of the Calculus of
Constructions (CoC) à la \citet{DBLP:books/daglib/0067012}. For space restrictions
we refer to the formalization for details.

\begin{example}[\coqfile{}{SubstitutionSystems.PCF_alt}]
  \label{ex:pcf}
  PCF is an extension of STLC with natural number and boolean
  types, together with operations on these. Specifying it as a
  multi-sorted binding signature is direct and uses the disjoint sum
  of signatures to extend the signature of STLC.
\end{example}

\begin{example}[\coqfile{}{SubstitutionSystems.CCS_alt}]
  \label{ex:coc}
  In \citep{DBLP:books/daglib/0067012} the CoC is presented as a 2-sorted language with
  $\sort = \{ \cfont{ty}, \cfont{el} \}$ representing the types and
  terms. The pre-syntax part of this is a simple multi-sorted binding
  signature, with binders for both
  types and terms. Formalizing it posed no difficulties.
\end{example}

\section{Understanding the Notion of Strength}\label{sec:understanding}
Originally, the aim of the work presented in this paper was to benefit
from our previous construction of a certified monadic
substitution for \emph{untyped} terms based on \omegacocont base functors $H$ with
strength, which is also implemented in \UniMath~\citep{signatures_to_monads}.
The move is to a more complex
base category to account for \emph{multi-sorted} terms, the construction of $H$ now
from a \emph{multi-sorted} binding signature.
But for the construction
of strength of $H$ to be modular, we identified the need for the
widening of the strength concept to account for functors that are not endo.
This all worked, in the sense of providing computer-formalized
mathematical structures, with a certified monadic substitution
operation as outcome. The original strength concept had already been
understood mathematically, in the sense of relating it to more
abstract mathematical notions, by \citet{DBLP:conf/types/AhrensM15}:
they propose a definition of \emph{relative strength} \citep[Definition
11]{DBLP:conf/types/AhrensM15} that is spelt out on the level of
monoidal categories, but directed towards the implemented strength.
However, the authors sketched the instantiation very briefly and also
mentioned that this is an instance of a strength concept for actions.
The definition of relative strength just mentioned (``relative'' does not suggest
that we are aiming for \emph{relative monads}), and also the use of action
strength by \citet{DBLP:conf/lics/Fiore08}, were geared towards uses
with a specific monoidal category, the one canonically obtained through composition of endomorphisms.

In what follows, we use two other dimensions of \UniMath beyond the
library of certified structures: we are formalizing the meta-theory
behind the concepts, in particular, we prove the claims in
\citep{DBLP:conf/types/AhrensM15} as to the link with abstract
strength concepts. We also use it as a research tool to obtain new
results on the mathematical justification of the widened strength
notion, here by \begin{enumerate*}[label=(\roman*)] \item identifying a bicategorical scenario for action
strength, and by \item continuing the investigation into a ``higher-order''
view of actions as monoidal functors, as put forward by \citet{JanelidzeKelly2001}. \end{enumerate*} We
even mention work in progress in that spirit that sees action
strength itself as a monoidal functor.

\subsection{Action-Based Strength}\label{sec:actionbased}

Action-based strength will be our term for the extra requirement on a
functor $F:\A\to\A'$ to be a morphism between actions of a monoidal
category $\V$ on $\A$ and $\A'$, respectively. We first present the
``point-free version''
(\coqident{CategoryTheory.Monoidal.ActionBasedStrength}{param_distr_pentagon_eq_body})
based on the definition of actions found in \citep{JanelidzeKelly2001}.
This version has the advantage of asking for very little to be added to ``general
category theory''. However, the ``natural'' level of generality of its definition
is in an arbitrary bicategory instead of the different functor categories, still with an ordinary monoidal
category as parameter space, and most of our formal proof development is on that level.

We consider a monoidal category $\V$ with tensor product $\otimes$,
categories $\A$ and $\A'$ and actions of $\V$ on both, expressed as
strong monoidal functors $F:\V\to[\A,\A]$ and $F':\V\to[\A',\A']$
(where the monoidal structure on the endofunctor categories is given by 
composition in diagrammatic order). Let $G:\A\to\A'$ be a functor and let $\delta$ be a natural
transformation between functors from $\V$ to $[\A,\A']$ that
are defined on objects $v$ of $\V$ as $F'v\cdot G$ and $G\cdot Fv$,
respectively. The transformation $\delta$ is a \fat{parameterized
  distributivity} for $G$ (but could also be called a
\emph{strength} of $G$) if the diagrams in \cref{fig:actionbasedstrengthalt} commute.

\begin{figure}[tb]
  \[ \def\arraystretch{2.2}
    \begin{array}{c}
    \xymatrix@C3pc@R1pc{
      F'I\cdot G \ar[rr]^{\delta_I}&&G\cdot FI\\
      &G \ar[ul]^{\epsilon'\cdot G}\ar[ur]_{G\cdot\epsilon}
        }\\
\xymatrix@C-1.5pc@R-1pc{
                  **[l] F'(v\otimes w)\cdot G  \ar[rr]^{\delta_{v\otimes w}}  & &  **[r] G\cdot F(v\otimes w) \\
                  **[l] F'w\cdot F'v\cdot G \ar[u]^{\mu'_{v,w}\cdot G}\ar[dr]_-{F'w\cdot \delta_v\hspace{1em}} & &   **[r]G\cdot Fw\cdot Fv \ar[u]_{G\cdot \mu_{v,w}}\\
                  &   **[c] F'w\cdot G\cdot Fv \ar[ur]_-{\hspace{1em}\delta_w\cdot Fv}
}
\end{array}
\]
\vspace{-0.4cm}
\caption{\coqident{CategoryTheory.Monoidal.ActionBasedStrength}{param_distr_triangle_eq}, \coqident{CategoryTheory.Monoidal.ActionBasedStrength}{param_distr_pentagon_eq_body}}\label{fig:actionbasedstrengthalt}
\vspace{-0.4cm}
\Description{}
\end{figure}

In the diagrams, all nodes are functors from $\A$ to $\A'$, with $I$ the unit of $\V$. The arrows $\epsilon:\Id{}\to FI$ and
$\mu_{v,w}:Fw\cdot Fv\to F(v\otimes w)$ are the isomorphisms coming
from $F$ being strong monoidal, and likewise for $\epsilon'$ and
$\mu'$ relative to $F'$. Since the vertically arranged morphisms in the pentagon
diagram are isomorphisms, the diagram is essentially a triangle
describing $\delta_{v\otimes w}$ as a suitable composition of
$\delta_v$ and $\delta_w$, as expected (and in accordance with our notion of strength in \cref{def:strength}).
Let us remark that $(G,\delta)$ cannot be seen as a monoidal natural transformation from $F$ to $F'$---already
because they need not have the same target category---so this definition
is specific to the interpretation of the strong monoidal functors $F$ and $F'$ as actions.

While the point-free definition is concise, we rather base our analysis of
strength for signatures on a more concrete version that is very close
to the original definition of \citet{Pareigis1977}, coming under the
name of $\CC$-categories for actions and $\CC$-functors for strong
functors between actions. We even formalized a transformation of
action strengths from the point-free to the concrete setting
(\coqident{CategoryTheory.Monoidal.ActionBasedStrength}{actionbased_strong_functor_from_alt}).
The concrete version exploits the equivalence between the functor
categories $[\A\times \B,\CC]$ and $[\B,[\A,\CC]]$ which is currying /
uncurrying (together with swapping $\A$ and $\B$). This equivalence extends to a
biadjunction in the bicategory of (small) categories
(\coqident{Bicategories.PseudoFunctors.Examples.CurryingInBicatOfCats}{currying_biajd}).

\begin{definition}[\coqident{CategoryTheory.Monoidal.Actions}{action}]\label{def:action}
An \fat{action} of the given monoidal category $\V$
(as above) is a functor $\odot:\A\times\CC\to\A$ (instead of its
curried version $F:\CC\to[\A,\A]$) that is written infix, together
with natural isomorphisms $\varrho$, with components $\varrho_a:a\odot I\to a$ (\emph{right unitor} of the action), and
$\chi$, with components $\chi_{a,v,w}:(a\odot v)\odot w\to a\odot(v\otimes w)$ (\emph{convertor} of the action), that make the
diagrams in \cref{fig:action} commute (with $\lambda$ and $\alpha$ the left unitor and associator of $\V$, respectively).
\end{definition}
\begin{figure}[tb]
\[ \def\arraystretch{2.2}
  \begin{array}{c}
\xymatrix@C3pc@R1pc{
    (a\odot I)\odot v \ar[rr]^{\chi_{a,I,v}} \ar[dr]_{\varrho_a\odot 1_v}& & a \odot (I \otimes v) \ar[dl]^{1_a\odot \lambda_v} \\
    & a \odot v
      }\\
 \xymatrix@C-2pc{
    (a\odot u)\odot(v\otimes w) \ar[rr]^{\chi_{a,u,v\otimes w}}& & a\odot(u\otimes(v\otimes w)) \\
    ((a\odot u)\odot v)\odot w \ar[u]_{\chi_{a\odot u,v,w}} \ar[dr]_{\chi_{a,u,v}\odot 1_w}& &a\odot((u\otimes v)\otimes w) \ar[u]^{1_a\odot\alpha_{u,v,w}}\\
    &(a\odot(u\otimes v))\odot w \ar[ur]_{\chi_{a,u\otimes v,w}}
      }
\end{array}
\]
\vspace{-0.4cm}
\caption{\coqident{CategoryTheory.Monoidal.Actions}{action_triangle_eq}, \coqident{CategoryTheory.Monoidal.Actions}{action_pentagon_eq}}\label{fig:action}
\vspace{-0.6cm}
\Description{}
\end{figure}

\begin{definition}[\coqident{CategoryTheory.Monoidal.ActionBasedStrength}{actionbased_strength}]\label{def:actionbasedstrength}
  Let $\V$ be a mono\-idal category, and let $\odot$ and $\odot'$ be $V$-actions on categories $\A$, $\A'$, with $\varrho$, $\varrho'$, $\chi$, $\chi'$ the right unitors and convertors of the actions, respectively.
  Let $G:\A\to\A'$ be a functor. A natural transformation $\zeta$, with components $\zeta_{a,v}:Ga\odot' v\to G(a\odot v)$ is an \fat{action-based strength} for $G$ if the diagrams in \cref{fig:actionbasedstrength} commute. The pair $(G, \zeta)$ is then called a \fat{strong action-based functor} from $\odot$ to $\odot'$.
\end{definition}
 \begin{figure}[tb]
   \[ \def\arraystretch{2.2}
  \begin{array}{c}
\xymatrix@C3pc@R1pc{
    Ga\odot' I \ar[rr]^{\zeta_{a,I}} \ar[dr]_{\varrho'_{Ga}}& & G(a \odot I) \ar[dl]^{G\varrho_a} \\
    & Ga
      }\\
 \xymatrix@C-1.5pc{
    Ga\odot'(v\otimes w) \ar[rr]^{\zeta_{a,v\otimes w}}& & G(a\odot(v\otimes w)) \\
   (Ga\odot' v)\odot' w\ar[u]_{\chi'_{Ga,v,w}} \ar[dr]_{\zeta_{a,v}\odot' 1_w}& &G((a\odot v)\odot w) \ar[u]^{G\chi_{a,v,w}}\\
    &G(a\odot v)\odot' w \ar[ur]_{\zeta_{a\odot v,w}}
      }
\end{array}
\]
\vspace{-0.4cm}
\caption{\coqident{CategoryTheory.Monoidal.ActionBasedStrength}{actionbased_strength_triangle_eq}, \coqident{CategoryTheory.Monoidal.ActionBasedStrength}{actionbased_strength_pentagon_eq} (correspond to those of \cref{fig:actionbasedstrengthalt})}\label{fig:actionbasedstrength}
\vspace{-0.4cm}
\Description{}
\end{figure}

Notice that \citep{Pareigis1977} required $\zeta$ to be a natural isomorphism, which would make our application to
multi-sorted binding signatures impossible (also \citet{DBLP:conf/lics/Fiore08} worked without the requirement of isomorphism).
Already in \citep{Pareigis1977}, there is a notion of morphism between strong action-based functors, under the name $\CC$-morphism. With our notations:
\begin{definition}[Natural transformation of strong action-ba\-sed functors]\label{def:nattransabstrong}
  Given the data in the preceding definition and a further functor
  $G':\A\to\A'$ with action-based strength $\zeta'$, a natural
  transformation $\eta:G\arr G'$ is an \fat{action-based natural
    transformation} if the diagram
\begin{equation}
\xymatrix@C3pc@R1pc{
  Ga\odot'v  \ar[r]^{\zeta_{a,v}} \ar[d]_{\eta_a\odot' 1_v}& G(a\odot v)\ar[d]^{\eta_{a\odot v}}\\
  G'a\odot'v  \ar[r]^{\zeta'_{a,v}}& G'(a\odot v)
      }
\label{fig:actionnattrafo}
\end{equation}
commutes (\coqident{CategoryTheory.Monoidal.ActionBasedStrongFunctorCategory}{Strong_Functor_Category_mor_diagram}).
\end{definition}
For a fixed monoidal category $\V$, categories $\A$ and $\A'$ and
actions $\odot$ and $\odot'$ of $\V$ on $\A$ and $\A'$, we define the
``action-based strong functor category'' with the strong action-based
functors from $\odot$ to $\odot'$ as objects and the action-based
natural transformations between those objects as morphisms. This is
comfortably done in \UniMath by exploiting the concept of displayed
category introduced by \citet{DBLP:journals/lmcs/AhrensL19}, meaning
that only the extra data and constraints with respect to the
underlying functor category $[\A,\A']$ have to be given. This leads to
a displayed category
(\coqident{CategoryTheory.Monoidal.ActionBasedStrongFunctorCategory}{Strong_Functor_category_displayed}),
from which the sought category is obtained generically as the
``total'' category
(\coqident{CategoryTheory.Monoidal.ActionBasedStrongFunctorCategory}{Strong_Functor_category}).
In this case, a formalized form of the ``structure identity
principle'' \cite[Section 9.8]{hottbook} allows us to prove in a
straightforward and modular way that the category thus obtained is
univalent when the target category $\A'$ is (\coqident{CategoryTheory.Monoidal.ActionBasedStrongFunctorCategory}{is_univalent_Strong_Functor_category}).
For this to be applicable, it is crucial that the notion of \emph{action-based} natural
transformation does not come with
supplementary data, just with a property (\ie, a proposition) to be satisfied. This
makes it especially rewarding to use the framework of displayed
bicategories by \citet{DBLP:conf/rta/AhrensFMW19} (as implemented in
\coqfile{}{Bicategories.DisplayedBicats.DispBicat}) to construct the
bicategory of actions, strong action-based functors and action-based
natural transformations (for a given monoidal category $\V$) as displayed over the bicategory of (small)
categories
(\coqident{CategoryTheory.Monoidal.ActionsFormBicategory}{actions_disp_bicat})
(from which the library immediately derives an ``ordinary'' bicategory
(\coqident{CategoryTheory.Monoidal.ActionsFormBicategory}{actions_bicat})).
Also notice that the diagram in \cref{fig:actionnattrafo} is
symmetric in the sense that if $\eta$ is an isomorphism, its inverse
satisfies it again (with $(G,\zeta)$ and $(G',\zeta')$ exchanged). In other words, the
obtained displayed bicategory is locally a groupoid
(\coqident{CategoryTheory.Monoidal.ActionsFormBicategory}{actions_disp_locally_groupoid}).
We are not aware of such an observation in the literature.

\subsection{Signature Strength as Action-Based Strength}\label{sec:instabstrength}

In the interest of securing the meta-theory as well, we formalize relative strength of \citep[Definition
11]{DBLP:conf/types/AhrensM15} (\coqident{CategoryTheory.Monoidal.ActionBasedStrength}{rel_strength}) and formally confirm
the claims of \citep{DBLP:conf/types/AhrensM15}:
\begin{enumerate*}[label=(\roman*)]
\item The parameters of action-based strength can be instantiated so
  that one can construct transformations from relative strength to
  instantiated action-based strength
  (\coqident[from_relative_strength]{CategoryTheory.Monoidal.ActionBasedStrength}{actionbased_strength_from_relative_strength})
  and vice versa
  (\coqident[from_actionbased_strength]{CategoryTheory.Monoidal.ActionBasedStrength}{relative_strength_from_actionbased_strength}).
\item For the case of $\D=\D'=\CC$ in \cref{def:strength}, the
  parameters of relative strength can be instantiated so that one can
  construct transformations from strength to instantiated relative
  strength\label{item:instrel}
  (\coqident[from_signature]{SubstitutionSystems.Signatures}{rel_strength_from_signature})
  and vice versa
  (\coqident[signature_from]{SubstitutionSystems.Signatures}{signature_from_rel_strength}).
\end{enumerate*}

The endofunctors, and the pointed endofunctors, over a category $\CC$
readily form monoidal categories, and the forgetful functor $U$ that
forgets the ``points'' is even a strong monoidal functor between these
monoidal categories
(\coqident[from_ptd_as_strong_monoidal_functor]{CategoryTheory.Monoidal.PointedFunctorsMonoidal}{forgetful_functor_from_ptd_as_strong_monoidal_functor}).
It is this $U$ that is used in step~\ref{item:instrel} above to
instantiate the parameter of that name in relative strength (mentioned
already in \citep{DBLP:conf/types/AhrensM15}).
However, in the notion of
prestrength, we also have to deal with functors that are not endos.
The notion of action is wide enough to handle
this: the endomorphisms of $\CC$ act on any functor category with
source category $\CC$, see below. To see the situation of \cref{def:prestrength} more
clearly, we abstract away from the bicategory of (small) categories
with their functors and natural transformations and assume a
bicategory $B$. The categories $\CC$, $\D$ and $\D'$ of
\cref{def:prestrength} are replaced by objects $c_0$, $d_0$, $d_0'$ of
$B$. The object $c_0$ gives rise to a monoidal category $\Endo_B(c_0)$ built from
the hom-category $B(c_0,c_0)$, \ie, the endomorphisms of $c_0$ and
their 2-cells. $\Endo_B(c_0)$ replaces $[\CC,\CC]$ in the inner workings of \cref{def:prestrength}, while
$\Ptd(\CC)$ is replaced by an arbitrary monoidal category $\V$, and with an arbitrary strong monoidal $U$ from $\V$
to $\Endo_B(c_0)$.

Now, for any objects $c_0,d_0$ of $B$, we can define an action $\odot_{c_0,d_0}$ of $\Endo_B(c_0)$
on the hom-category $B(c_0,d_0)$, which on objects $f:c_0\to d_0$, $g:c_0\to c_0$ is given by $f \odot_{c_0,d_0}g \eqdef f\cdot g$.
We say that $\Endo_B(c_0)$ \emph{acts by precomposition} on $B(c_0,d_0)$ (together with the right unitor and
convertor satisfying the laws in \cref{fig:action}),
formalized as \coqident{CategoryTheory.Monoidal.ActionOfEndomorphismsInBicat}{action_from_precomp}.

Given a strong monoidal functor from $\V$ to $\V'$, any action of $\V'$ can be lifted to an action of $\V$ on
the same category (this generalizes an observation by \citet[p.~59]{DBLP:conf/lics/Fiore08} that the (obvious)
action of $\V'$ on itself can be lifted that way),
formalized as \coqident{CategoryTheory.Monoidal.ConstructionOfActions}{lifted_action}.
Thus, we get an action of $\V$ on $B(c_0,d_0)$. Reusing this action construction for $d_0'$ in place of $d_0$ as well,
one can study action-based strength for a given functor
$G:B(c_0,d_0')\to B(c_0,d_0)$ that replaces $H$ of \cref{def:prestrength}. The laws of \cref{fig:actionbasedstrength}
can then be instantiated to the present abstract bicategorical scenario
(see the logically equivalent formalized \coqident{CategoryTheory.Monoidal.ActionBasedStrengthOnHomsInBicat}{triangle_eq_nice} and
\coqident{CategoryTheory.Monoidal.ActionBasedStrengthOnHomsInBicat}{pentagon_eq_nice}).

Of course, we want to instantiate all of this to the bicategory $\CAT$ of (small)
categories, formalized as
\coqident{Bicategories.Core.Examples.BicatOfCats}{bicat_of_cats}.
A small technical problem arises as the forgetful functor
$U$ from $\Ptd(\CC)$ is not seen by the system as having target
$\Endo_\CAT(\CC)$. Once this is solved, we identify the equivalence of
this instance of action-based strength with the notion of strength in
\cref{def:strength}. This equivalence is not only logical, but extends
to an adjoint equivalence of a suitably formed category of signatures
with strength (for given $\CC$, $\D$, $\D'$) and the instance of the
action-based strong functor category mentioned after
\cref{def:nattransabstrong}, formalized as
\coqident{CategoryTheory.Monoidal.ActionBasedStrengthOnHomsInBicat}{EquivalenceSignaturesABStrongFunctors}.

\subsection{Understanding Strength Itself}\label{sec:strengthitself}

In category theory, many concepts can be mutually reduced to each
other, by appropriately chosen instances for the parameter categories
of these concepts. \citet{maclane} has many examples already for the
most common concepts such as initial objects, limits and adjunctions.
It is a matter of taste if an action is better ``understood'' in form
of the concrete \cref{def:action} or as a monoidal functor into
endomorphisms, but the second concept had already been there. For
action strength, one might ask if the extra concept of parameterized
distributivity has some merit over the likewise concrete, but prior
\cref{def:actionbasedstrength} (even if the concepts relate through categorical
currying, we consider them as distinct).

The vertical arrows in \cref{fig:actionbasedstrengthalt} are
isomorphisms, as are the legs of the triangle, so one might be tempted
to say that the laws express that $\delta$ is a homomorphism, with
$\delta_I$ being the ``identity'', and $\delta_{v\otimes w}$ being the
``composition'' of $\delta_v$ and $\delta_w$. However, the nodes in
the pentagon diagram depend on the objects $v$, $w$ of $\V$, and
$\delta$ is a natural transformation between functors into a functor
category.
Since the parameters run through the monoidal category $\V$, the
appropriate notion of homomorphism for $\delta$ is necessarily by way
of a monoidal functor with source $\V$ to which $\delta$ closely
corresponds.

We exploit that natural transformations can sometimes be seen as
functors and extend this to the representation of parameterized
distributivity as strong monoidal functors.

First, we study the elementary situation in plain categories, with a
precise connection between, on the one hand, natural transformations of functors into a
functor category and, on the other hand, functors into a specifically crafted category. More
precisely, natural transformations between $H,H':\CC\to[\A,\A']$
correspond to functors from $\CC$ to a category
$\T(H,H')$
(\coqident{CategoryTheory.Monoidal.ActionBasedStrongFunctorsMonoidal}{trafotarget_cat})
that is the total category of a displayed category over $\CC$
(\coqident{CategoryTheory.Monoidal.ActionBasedStrongFunctorsMonoidal}{trafotarget_disp}):
an object ``over'' $c$ is the pair consisting of $c$ and a natural
transformation $\alpha^c:H\,c\to H'\,c$---hence a morphism of $[\A,\A']$---and
a morphism ``over'' $f:c\to c'$ is a pair consisting of
$f$ and a proof that the ``naturality diagram'' commutes:
\begin{equation}
\xymatrix@C3pc@R1pc{
  Hc  \ar[r]^{\alpha^c} \ar[d]_{Hf}& H'c\ar[d]^{Hf'}\\
  Hc'  \ar[r]^{\alpha^{c'}}& H'c'
      }
\label{fig:trafotarget}
\end{equation}
This diagram does not ask for naturality of a transformation $\alpha:H\to H'$, it is only
expressed between the components $\alpha^c$ and $\alpha^{c'}$, and this only for morphism $f:c\to c'$.
When running through all objects and all morphisms of $\CC$, the ingredients added to $\CC$ in order to obtain  $\T(H,H')$
constitute such a natural transformation $\alpha:H\to H'$.

There is a forgetful
functor $U:\T(H,H')\to\CC$, and already from the previous description, it becomes clear that the natural transformations $\alpha$
from $H$ to $H'$ are in bijection with functors $N$ from $\CC$ to
$\T(H,H')$ that satisfy $U\cdot N=1_\CC$. This latter
identity is between functors, and, if expressed naïvely, relies on equality of objects in a category. In particular, the type of such identities is not a proposition.
The way out is to exploit that the target category is obtained through a displayed category. There is an elementary characterization
of functors from a source category into the total category of a displayed category for which composition with the forgetful functor agrees with
a given functor into the base of that target category, which in our case is just $1_\CC$. The elementary concept here is the formalization
of the notion of \emph{section}, in particular \coqident{CategoryTheory.DisplayedCats.Core}{section_disp}.
To show the above-mentioned correspondence, we thus concretely formalized a bijection between the natural
transformations from $H$ to $H'$ and the sections with respect to $\CC$ and the displayed category underlying $\T(H,H')$ (\coqident{CategoryTheory.Monoidal.ActionBasedStrongFunctorsMonoidal}{nat_trafo_to_section} and \coqident{CategoryTheory.Monoidal.ActionBasedStrongFunctorsMonoidal}{section_to_nat_trafo}).

It turns out that these observations can be smoothly carried over to an arbitrary bicategory for the elements in the target, instead of limiting them to functors and natural transformations. This gives a formalized bijection in form of \coqident{CategoryTheory.Monoidal.ActionBasedStrongFunctorsMonoidal}{nat_trafo_to_section_bicat} and \coqident{CategoryTheory.Monoidal.ActionBasedStrongFunctorsMonoidal}{section_to_nat_trafo_bicat}.

In the situation of parameterized distributivity, we are lacking the
notion of displayed monoidal category and the suitable section
characterization for strong monoidal functors from a given monoidal
category into the total category of the displayed monoidal category
whose composition with the forgetful functor yields a given strong
monoidal functor from the source into the base. We leave the development of such a notion as an open problem. It is essentially not a mathematical problem but
a formalization challenge---this time with the purpose of avoiding reasoning with
equality between strong monoidal functors. In this paper, we can only offer
one direction of the correspondence: given a parameterized distributivity, we construct
a strong monoidal functor from the monoidal category $\V$ of parameters into a specially crafted monoidal category.

We use the general result above and instantiate $\CC$
with $\V$ and $H$ and $H'$ with the source and target of $\delta$
and
add monoidal structure to $\T(H,H')$. Recall that objects of the category $\T(H,H')$ contain natural transformations from $\A$ to $\A'$ and that morphisms
of $\T(H,H')$ come with proof obligations for equations between such natural transformations. This allows us to
precisely capture the diagrams of \cref{fig:actionbasedstrengthalt} in the definition of the objects of $\T(H,H')$ that are needed
to turn it into a monoidal category: For the unit, we add
to the unit $I$ of $\V$ the natural transformation
$(G\cdot\epsilon)\vcomp((\epsilon')^{-1}\cdot G)$ to which $\delta_I$ is
supposed to be equal. In order to get the tensor on objects $(v,dv)$ and $(w,dw)$ of $\T(H,H')$,
we add to $v\otimes w$ the natural transformation that arises from
taking in the pentagon diagram the path that is meant to correspond to
$\delta_{v\otimes w}$, with $\delta_v$ and $\delta_w$ replaced by $dv$
and $dw$, respectively. This extends to a tensor operation on
$\T(H,H')$, but not without effort: we ask for the morphism of $\T(H,H')$ that corresponds to the tensor of two such morphisms, thus we have
to prove an equality between two chains of compositions of natural transformations. And there are more morphisms of $\T(H,H')$ asked for
in order to get left and right unitor and the associator. The formalization effort for these equalities turns out to be incommensurate with their proofs
on paper, the main problem being that there is too much structure in those ``concrete'' functors and natural transformations, so that
any attempt at simplifying the goals at hand makes them unreadably convoluted. The way out is a bicategorical generalization of the problem at hand,
extending the bicategorical generalization of the correspondence for the case of plain categories we mentioned above.
This does not mean that the formalization becomes an easy task, but there is still an adequation between the pencil-and-paper proof and the formalization.
Moreover, a part of the proofs has been obtained without a prior pencil-and-paper proof, which is a major desideratum for the development of formalized mathematical
results. Here, we briefly mention some of the more important elements of the formalization:
\coqident{CategoryTheory.Monoidal.ActionBasedStrongFunctorsMonoidal}{montrafotargetbicat_cat} is the target category, which is extended to a monoidal category
\coqident{CategoryTheory.Monoidal.ActionBasedStrongFunctorsMonoidal}{montrafotargetbicat_moncat}, where the morphism part of the tensor rests on
Lemma \coqident{CategoryTheory.Monoidal.ActionBasedStrongFunctorsMonoidal}{montrafotargetbicat_tensor_comp_aux} with a proof of over 200 lines,
and six more such lemmas for the unitors and the associator whose construction in total requires more than 1kloc. The laws themselves do not need much effort
since the extra data for morphisms in $\T(H,H')$ consists of equalities, hence their equalities are trivial under our general assumption that the
homsets of all categories in our formalization are sets in the sense of univalent foundations.
The main result is the bicategorical generalization of the translation of parameterized distributivity into strong monoidal
functors \coqident{CategoryTheory.Monoidal.ActionBasedStrongFunctorsMonoidal}{smf_from_param_distr_bicat_parts}
(the notion of parameterized distributivity is not yet properly generalized to bicategories), and the main result of this section is then the
instantiation of the monoidal target category
\coqident{CategoryTheory.Monoidal.ActionBasedStrongFunctorsMonoidal}{montrafotarget_moncat} and of the translation
itself \coqident{CategoryTheory.Monoidal.ActionBasedStrongFunctorsMonoidal}{smf_from_param_distr}.
It is in this sense that the strength laws express that $G$ is ``homomorphic''. We leave as an open problem to formulate precisely,
and prove, the full correspondence that strength laws are nothing but being ``homomorphic'' in the sense of our crafted monoidal target category.

\section{Conclusions and Further Work}\label{sec:conclusion}

We have presented a mathematical framework for the specification and generation of simply-typed syntax, fully implemented and computer-checked in \Coq, using the \UniMath library of univalent mathematics.
To this end, we generalized recent work by \citep{signatures_to_monads} on untyped syntax:
\begin{enumerate*}[label=(\roman*)]
\item we generalized definitions and adapted the statements and proofs---this required very little effort,
\item we extended the existing library on $\omega$-cocontinuous functors, and
\item we gave a new analysis of action-based strengths and embedded them into a bicategorical context.
\end{enumerate*}

It is difficult to provide exact numbers for what was involved in this
work as it is a modification and extension of a major library of
formalized mathematics. Indeed, some of the work was pure maintenance
of the library and is difficult to measure accurately, \eg, the notion
of isomorphism for monoidal categories had to be replaced which led to
a major overhaul of various files related to isomorphisms more
generally.
The tool \systemname{coqwc} counts approximately 6,700 lines of statements and proofs of new files, not counting the improvements and additions to existing material.

Our work relies heavily on \emph{universal properties}: both the syntax itself, as well as the substitution operation on it, are characterized through their universal properties.
Since universal properties are conditions of unique existence, with computational content,
the notions of contractible type and proposition provided by Univalent Foundations seem particularly well suited for discussing such work.

Our work benefited greatly from infrastructure provided by the \UniMath library.
Firstly, in \cref{sec:understanding} we heavily rely on the recent addition of
bicategory theory to \UniMath by \citet{DBLP:conf/rta/AhrensFMW19}.
A difficulty in bicategory
theory is the sheer complexity of the involved notions;
for instance a
bicategory in \cite{DBLP:conf/rta/AhrensFMW19} has $25$ fields of
increasing complexity.
By working in a proof assistant one can be
sure that no proof obligations have been overlooked.
This resembles
traditional metatheory of programming languages where many proofs
proceed by induction on large inductive definitions, leading to a multitude of
 cases to consider.
Secondly, the machinery of displayed categories~\citep{DBLP:journals/lmcs/AhrensL19} and displayed bicategories~\citep{DBLP:conf/rta/AhrensFMW19} helps building the complicated (bi)categories handled here.

A truly different approach to working with $\ccsort$ in the case of
$\CC = \set$ would be to work with the equivalent slice category
$\set / \sort$. The first version of the framework presented here was
in fact implemented this way. However, it had many drawbacks: first of
all it is less general, second, it was very clumsy to work with the
constructed monad on the slice category, third, working in a specific
category is often less convenient than in an abstract category (\eg,
things often unfold further, quickly leading to unreadable goals).

We have focused on the mathematical construction of abstract syntax, and on the mathematical justification of recursion on that syntax. The terms of our syntax associated to a signature are equivalence classes, and thus not convenient to handle in practice. A suitable inductive datatype of terms as used, \eg by \citet{DBLP:journals/jfrea/AhrensZ11} would be more convenient to use. We anticipate that it is feasible to construct, generically, this datatype and an isomorphism between our type family of terms and the inductive family.

It is not immediately clear how to directly generalize our approach to dependently-typed syntax (going beyond the pre-syntax of \cref{ex:coc}).
\citet{c-sys-from-relative-module} set out a path for the construction of such syntax;
it leads via pre-types and -terms in the form of monads as we construct them here,
and considers C-systems (\aka contextual categories \citep{DBLP:journals/apal/Cartmell86}) built from such pre-syntax as the raw material out of which to carve out the desired syntax.
The constructions presented here thus constitute an important stepping stone in Voevodsky's programme of formalizing the metatheory of type theory in Univalent Foundations.

\begin{acks}
We are grateful to Vladimir Voevodsky for many discussions on the subjects
of multi-sorted term systems and the intended applications to type theory.
We thank Peter LeFanu Lumsdaine for suggesting and formalizing an improvement
to one of our proofs.

Anders Mörtberg was supported by the Swedish Research Council (SRC,
Vetenskapsrådet) under Grant No.~2019-04545.

This material is based upon work supported by the National Science
Foundation under agreement No. DMS-1128155 and CMU 1150129-338510.

This work has
been partly funded by the CoqHoTT ERC Grant 637339.
\end{acks}

\bibliographystyle{ACM-Reference-Format}
\bibliography{literature}

\end{document}
\endinput
